\theoremstyle{plain}
\newtheorem{theorem}{Theorem}[section]
\theoremstyle{definition}
\newtheorem{definition}[theorem]{Definition}
\theoremstyle{remark}
\begin{document}

\twocolumn[
\icmltitle{Online Local Differential Private Quantile Inference via Self-normalization}



\icmlsetsymbol{equal}{*}

\begin{icmlauthorlist}
\icmlauthor{Yi Liu}{yyy}
\icmlauthor{Qirui Hu}{comp}
\icmlauthor{Lei Ding}{yyy}
\icmlauthor{Bei Jiang}{yyy}
\icmlauthor{Linglong Kong}{yyy}

\end{icmlauthorlist}

\icmlaffiliation{yyy}{Department of Mathematical and Statistical Sciences, University of Alberta, Edmonton, Canada}
\icmlaffiliation{comp}{Center for Statistical Science, Department of Industrial Engineering, Tsinghua University, Beijing, China}

\icmlcorrespondingauthor{Linglong Kong}{lkong@ualberta.ca}

\icmlkeywords{Machine Learning, ICML}

\vskip 0.3in
]



\printAffiliationsAndNotice{}  

\begin{abstract}
Based on binary inquiries, we developed an algorithm to estimate population quantiles under Local Differential Privacy (LDP). By self-normalizing, our algorithm provides asymptotically normal estimation with valid inference, resulting in tight confidence intervals without the need for nuisance parameters to be estimated. Our proposed method can be conducted fully online, leading to high computational efficiency and minimal storage requirements with $\mathcal{O}(1)$ space. We also proved an optimality result by an elegant application of one central limit theorem of Gaussian Differential Privacy (GDP) when targeting the frequently encountered median estimation problem. With mathematical proof and extensive numerical testing, we demonstrate the validity of our algorithm both theoretically and experimentally.

\end{abstract}

\section{Introduction}

Personal data is currently widely used for various purposes, such as facial recognition, personalized advertising, medical trials, and recommendation systems to name a few.  While there are potential benefits, it is important also to consider the risks associated with handling sensitive personal information.  For instance, research on diabetes can provide valuable insights that may benefit society as a whole in the long term. However, it is crucial to keep in mind that participants may suffer direct consequences if their data is not properly protected through controlled disclosure, such as a rise in health insurance premiums.

The concept of Differential Privacy (DP; \citealp{dwork2006our}) has been successful in providing a rigorous condition for controlled disclosure by bounding the change in the distribution of outputs of a query made on a dataset under the alteration of one data point. This has led to a vast amount of literature under the umbrella of DP, resulting in various generalizations, tools, and applications. However, while enjoying the mathematically solid guarantee of DP and its variants, concerns about a weak link in the process, the trusted curator, are beginning to arise.

The use of trusted curators undermines the spirit of the solid cryptographic level of privacy protection that DP provides. This risk is not limited to information security breaches and rogue researchers but also includes legal proceedings where researchers may be compelled to hand over the raw data, breaking the initial promise made to DP at the time of data collection. Two concepts, Local Differential Privacy (LDP) and pan-DPs, are proposed as solutions. The pan-DP directly counters this issue by solidifying the algorithm to withstand multiple announced intrusions (subpoenas) or one unannounced intrusion (hackers). The concept of LDP was first introduced formally by \citep{kasiviswanathan2011can}, but its early form can be traced back to \citep{evfimievski2003limiting} and \citep{warner1965randomized}, in the name of "amplification" and "randomized response survey," respectively. 

In LDP settings, the sensitive information never leaves the control of the users unprotected. The users encode and alter their data locally before sending them to an untrusted central data server for further analysis and computation. Recently, in \citep{pmlr-v125-amin20a} unveiled a connection between pan-DP and LDP by considering variants of pan-DP framework that can defend against multiple unannounced intrusions. Surprisingly, this requirement can only be fulfilled if the data is scrambled before it leaves the owner's control, which goes back to the definition of LDP.  For better privacy protection, many big tech companies have already implemented LDP into their products, such as Google \cite{erlingsson2014rappor} and Microsoft \cite{ding2017collecting}.

This discovery rekindled the research interest in LDP. Researchers have begun to consider fundamental statistical problems, such as estimating parameters, modeling, and hypothesis testing under this constraint. The quantiles, including the median, are basic summary statistics that have been widely studied within the framework of differential privacy. Early research in this area includes the estimation of quantiles under the central DP setting, as presented in \citep{dwork2009differential} and \citep{lei2011differentially}. More recent advancements, such as \citep{smith2011privacy}, have proposed a rate-optimal sample quantile estimator that does not rely on the evaluation of histograms. \citep{pmlr-v139-gillenwater21a} further extended this research by estimating multiple quantiles simultaneously. Despite these advances, the quantile estimation under the central DP setting remains an active area of research, with new work in various applications such as \citep{alabi2022bounded} and \citep{ben2022archimedes}.

In the central DP setting, a trusted curator can acquire the actual sample quantiles and other summary statistics, with the only limitation being that the release of the output must conform to the DP condition. However, under the local DP setting, the curator does not have access to the true data and can only see proxies generated by the users. This makes it more challenging to design local DP algorithms that can provide valid results leading to greater problems in developing corresponding theoretical properties and providing further statistical inference.

Researchers often propose consistent estimators for the parameters of interest and derive the asymptotic normality. However, these estimators often involve nuisance parameters that are not trivial to obtain or estimate, making them difficult to deploy in real-world scenarios. To address this issue, \citep{shao2010self} developed the methodology of self-normalization for constructing confidence intervals. This method involves designing a statistic called the self-normalizer, which is proportional to the nuisance parameters, and making the original estimate a pivotal quantity by placing it and the self-normalizer in the numerator and the denominator, thereby canceling out the nuisance parameters and leading to an asymptotically pivotal distribution. This methodology provides a powerful tool for statistical inference under complex data, particularly in the context of LDP frameworks where obtaining accurate original data or consistently estimating nuisance parameters without an additional privacy budget is challenging.

Efficient computation is essential for the practicality of LDP algorithms, as large sample sizes are necessary to counteract the effects of local perturbations and achieve optimal performance. Meanwhile, online computation is another valuable attribute of LDP algorithms, as it reduces storage requirements and diminishes risks associated with information storage. Early attempts of introduce online computation to DP algorithms can be traced back to \citep{jain2012differentially}, where additive Gaussian noise was injected into the gradient to provide DP protection. Later, \citep{pmlr-v70-agarwal17a} gives an online linear optimization DP algorithm that with optimal regret bounds. The concept of online computation has also been incorporated into federated learning, as discussed by \citep{9069945}. More recently, \citep{lee2022fast} has facilitated online computation for a random scaling quantity using only the trajectory of stochastic optimization, effectively eliminating the need for past state storage and enhancing computational efficiency. In contrast to traditional studies on DP online algorithms, our emphasis is on harnessing online computation for convenience. Our theoretical analysis concentrates on the statistical properties of the proposed estimators, encompassing aspects such as consistency, asymptotic normality, and more.

In this paper, our contributions are listed as follows.
\begin{itemize}
    \item We propose a new LDP algorithm for population quantile estimation that does not require a trusted curator. Under some mild conditions, we derive the consistency and asymptotic normality of the proposed quantile estimator. 
    \item We construct the confidence interval of the population quantiles via self-normalization, which eliminates the need for estimating the asymptotic variance in the limiting distribution. Furthermore, this procedure can be implemented online without storing all past statuses.
    
  \item  We also discuss the optimality of the proposed algorithm. By combining it with the central limit theorem of GDP, we demonstrate that our algorithm for median estimation achieves the lower bound of asymptotic variance among all median estimators constructed by a binary random response-based sequential interactive mechanism under LDP. 
\end{itemize}

The structure of this paper is as follows. We begin by providing an overview of the concepts of central DP and LDP. Then present our proposed methodology, detailing the algorithms and their corresponding theoretical results. Finally, we provide experimental results to demonstrate the effectiveness of our approach.

\section{Preliminaries}
\subsection{Central Differential Privacy}
\begin{definition}\citep{dwork2006our} 
A randomized algorithm $\mathcal{A}$, taking a dataset consisting of individuals as its input, is $(\epsilon, \delta)$-differentially private if, for any pair of datasets $S$ and $S^{\prime}$ that differ in the record of a single individual and any event $E$, satisfies the below condition:
\begin{equation*}
    \mathbb{P}[\mathcal{A}(S) \in E] \leq e^{\epsilon} \mathbb{P}\left[\mathcal{A}\left(S^{\prime}\right) \in E\right]+\delta.
\end{equation*}

When $\delta=0$, $\mathcal{A}$ is called 
$\epsilon$-Differentially Private ($\epsilon$-DP).
\end{definition}

The concept of DP only imposes constraints on the output distribution of an algorithm $\mathcal{A}$, rather than placing restrictions on the credibility of the entity running the algorithm or protecting the internal states of $\mathcal{A}$. The existence of the curator who has access to the raw data set is why this approach is known as "Central" DP. The curator simplifies the algorithm design and often leads to an asymptotically negligible
loss of accuracy from privacy protection \citep{cai2021cost}.
\subsection{Local Differential Privacy }
Despite the varying definitions of LDP due to the level of interactions, all of them depend on the following concept called $(\epsilon, \delta)$-randomizer. 
\begin{definition}\cite{8948625} An $(\epsilon, \delta)$-randomizer $R: X \rightarrow Y$ is an $(\epsilon, \delta)$-differentially private function taking a single data point as input.

\end{definition}

The definition of randomizer is mathematically a special case of the central DP. The main difference between the central and local DP is the role of the curator, which is further determined by the level of interactions allowed. In LDP, the curator coordinates interactions between $n$ users, each of whom holds their own private information $X_i$. In each round of interaction, the curator selects a user and assigns them a randomizer $R_t$. If the $(\epsilon, \delta)$ parameters are allowed by the experiment setting, the user will run the randomizer on their private information and release the output to the curator. 

The level of interactions can vary from full-interactive, where the curator can choose the randomizer and the next user based on all previous interactions, to sequential (also called one-shot) interactive, where the curator is not allowed to pick one user twice but is still able to adaptively picking the next the user-randomizer pairs based on all previous interactions, to non-interactive, where adaptivity is forbidden, and all user-randomizer pairs must be determined before any information is collected.
If the curator is further forbidden from varying the randomizer $R$ and tracking back outputs to a specific user, it will lead to another interesting setting called shuffle-DP \citep{cheu2019distributed}.

\subsection{Notations}
\par In this paper, we employ the following notations. $\mathbf{1}_{\left\{\cdot\right\}}$ is the indicator function and $[a]$ denotes the largest integer that does not exceed $a$. $\mathcal{O}$ (or ${\scriptstyle{\mathcal{O}}}%
$) denotes a sequence of real numbers of a certain order. For instance, ${\scriptstyle{\mathcal{O}}}(n^{-1/2})$ means
a smaller order than $n^{-1/2}$, and by $\mathcal{O}_{a.s.}$
(or ${\scriptstyle{\mathcal{O}}}_{a.s.}$) almost surely $\mathcal{O}$ (or ${\scriptstyle{\mathcal{O}}}$). For sequences $a_n$ and $b_n$, denote  $a_n \asymp b_n$ if there exist positive contants $c$ and $C$ such that $cb_n\leq a_n\leq Cb_n$. The symbol $\xrightarrow{d}$ means weak convergence or converge in distribution.

\section{Algorithm and Main Results}
\subsection{Algorithm}
\par Let $x_1,\dots, x_n, \dots$ be independently and identically distributed(i.i.d.) random variables defined on $\mathbb{R}$ representing private information of each user, with target quantile $\tau$ and corresponding true value $Q$, i.e., $\mathbb{P}(x_i \leq Q) = \tau$. To ensure the uniqueness of quantiles, we assume the $x_i$'s are continuous random variables, with positive density on the target quantile. In practice, we can perturb the data by a small amount of additive data-independent noise to remove atoms in the distribution as is in \citep{pmlr-v139-gillenwater21a}.

The design of the local randomizer is crucial for LDP mechanisms as it must properly choose the inquiry to the user in order to maximize the gathering of information related to the estimation of the target quantile without violating privacy conditions. The population quantiles can be considered as a minimizer of the check loss function: 
\begin{equation*}
    l_\tau(x,\theta)= \begin{cases}\tau (x-\theta) & \text { if } x \geq \theta \\ (\tau-1) (x-\theta), & \text { if } x<\theta\end{cases}.
\end{equation*}

In the non-DP case, a known solution is the use of stochastic gradient descent, as outlined in \citep{joseph2015stochastic}. It is important to note that for each point, the gradient it contributes is purely determined by the binary variable representing whether the value is greater than $\theta$ or not. 
This motivates us to modify the stochastic gradient descent process by adding a local randomization process, resulting in the Algorithm \ref{alg:LRC} and \ref{alg:main} outlined below:
\begin{algorithm}
   \caption{Locally Randomized Compare (LRC)}
   \label{alg:LRC}
\begin{algorithmic}
   \STATE {\bfseries Input:} Inquiry $q$, response rate $r$, private data $x$
   \STATE $u \sim Bernoulli(r)$
    \STATE $v \sim Bernoulli(0.5)$ 
   \IF{$u=1$}
   \STATE \textbf{return} $\mathbf{1}_{x>q}$
   \ELSE
   \STATE \textbf{return} $v$
   \ENDIF
\end{algorithmic}
\end{algorithm}

\begin{algorithm}
 \caption{Main Algorithm}
   \label{alg:main}
\begin{algorithmic}
   \STATE {\bfseries Input:} Step sizes $d_n$, target quantile $\tau \in (0,1)$, truthful response rate $r$
   \STATE Initialize: $n \gets 0$, $q_0 \gets 0$, $v^a_0\gets0$, $v^b_0\gets0$, $Q_0 \gets 0$
   \REPEAT
\STATE $n \gets n+1$
   \STATE Inquire: $s\gets LRC(q_{n-1},r,x_n)$
\IF{$s$ is $1$}
\STATE $$q_{n}\gets q_{n-1} +\frac{1-r+2\tau r}{2}d_n$$
\ELSE
\STATE $$q_{n}\gets q_{n-1} -\frac{1+r-2\tau r}{2}d_n$$
   \ENDIF
\STATE $Q_n= ((n-1)Q_n+q_n)/n$
\STATE $v^a_n\gets v^a_{n-1} +n^2 Q^2_n$
\STATE $v^b_n\gets v^b_{n-1} +n^2 Q_n$
\STATE Destroy $v^a_{n-1},v^b_{n-1},Q_{n-1},q_{n-1}$
\UNTIL{ Forever}
\end{algorithmic}
\end{algorithm}
In Algorithm \ref{alg:LRC}, generating randomness of $v$ before the if-condition fork may seem wasteful, but it prevents side-channel attacks such as inferring the true value based on the timing of response \cite{5207636,5370703}. Algorithm \ref{alg:main} collects random responses and generates the next inquiry accordingly. Therefore, Algorithm \ref{alg:main} satisfies the definition of sequential interactive local DP.

The following algorithm can be used when estimations and confidence intervals are required. These values are not calculated at every step to minimize computational expenses.
\begin{algorithm}[H]
 \caption{Generate Confidence Interval}
   \label{alg:infer}
\begin{algorithmic}
   \STATE {\bfseries Input:} Internal states of Algorithm \ref{alg:main}: $n$, $Q_n$, $v^a_n$, $v^b_n$
   \STATE $N_n\gets n^{-1}\left(v^a_n-2Q_n v^b_n+Q_n^2 n(n+1)(2 n+1)/6\right)$
   \STATE $W\gets n^{-1} \mathcal{U}_{1-\alpha/2}\sqrt{N_n}$ 
      \STATE {\bfseries Return:} Confidence interval $(Q_n-W,Q_n+W)$
\end{algorithmic}
\end{algorithm}

The use of dichotomous inquiry in data privacy brings multiple advantages. One benefit is the reduced communication cost, as it only takes one bit to respond.  Additionally, the binary response can make full use of the DP budget, as opposed to methods such as the Laplace mechanism, which may provide unnecessary privacy guarantees beyond $\epsilon$-DP,  
 as outlined in Theorem 3 in \cite{balle2018privacy} and Theorem 2.1 in \cite{IAMA}. 
 
 Furthermore, people tend to be more comfortable answering dichotomous questions compared to open-ended ones \citep{brown1996response} 
 as they present a choice between two options and may be perceived as less threatening than open-ended questions, which require more detailed and nuanced responses. In addition, the binary approach is easy to understand for users. With the proper choice of truthful response rate $r$, the algorithm known as the random response can be easily simulated through coin flips or dice rolls, allowing users to understand it fully and are able to "run" it without the help of electronic devices. This is in contrast to a DP mechanism involving the usage of random distribution on real numbers. Due to the finite nature of the computer, the imperfection of floating-point arithmetic leads to serious risks with effective exploits. For more information, please refer to \citep{mironov2012significance,jin2021we,haney2022precision}.

Before discussing the specific characteristics of our estimator, we will first demonstrate its performance through a sample trajectory. The experiment is conducted with a truthful response rate with $r = 0.5$, which means half of the responses are purely random. The objective is to estimate the median from i.i.d. samples. The true underlying distribution is a standard normal distribution.

It can be seen that from Figure \ref{fig:estimatetrend}, the proposed estimator converges to the true value, and both infeasible and proposed confidence intervals, defined later, contain the true value at a slightly larger sample size. Also, the proposed confidence intervals are highly competitive with the infeasible one in width. Refer to Figure \ref{fig:alterfigure1} and \ref{fig:alterfigure2} for convergence trajectories under different initialization or target quantiles.

\begin{figure}[H]
    \centering
    \includegraphics[width= 9 cm,height=6.5cm]{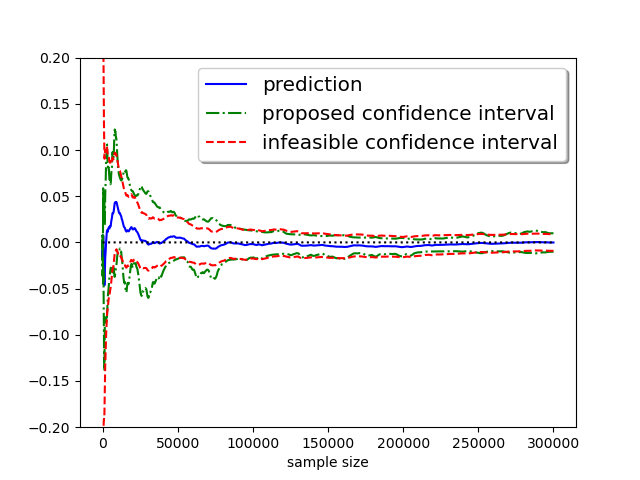}
    \caption{A sample trajectory of estimator $Q_n$, infeasible confidence interval (\ref{EQ:infeasibleinterval}) and proposed confidence interval (\ref{SNinerval}). The horizon dotted line is the true value $Q = 0$.   }
    \label{fig:estimatetrend}
\end{figure}

Next, we show the LDP property of our algorithm:

\begin{theorem}\label{THM:LRCDP}
    Algorithm \ref{alg:LRC} is an $(\epsilon,0)$-randomizer with $\epsilon=\log((1+r)/(1 - r)) $.
\end{theorem}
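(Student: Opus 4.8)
The plan is to verify the $(\epsilon,0)$-randomizer condition directly from the definition by computing the output distribution of Algorithm~\ref{alg:LRC} on an arbitrary private input $x$ and taking the worst-case ratio. The randomizer outputs a value in $\{0,1\}$, so it suffices to bound the two probabilities $\mathbb{P}[R(q,r,x)=1]$ and $\mathbb{P}[R(q,r,x)=0]$ across all inputs $x$ (for fixed inquiry $q$ and response rate $r$), since any event $E\subseteq\{0,1\}$ is a union of atoms and it is enough to check $E=\{0\}$ and $E=\{1\}$.

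First I would condition on the latent coin $u\sim\mathrm{Bernoulli}(r)$. If $u=1$ (probability $r$) the output is the deterministic bit $\mathbf{1}_{x>q}$; if $u=0$ (probability $1-r$) the output is $v\sim\mathrm{Bernoulli}(1/2)$, independent of $x$. Hence for any $x$,
\begin{equation*}
\mathbb{P}[R(q,r,x)=1] = r\,\mathbf{1}_{x>q} + (1-r)\tfrac{1}{2},
\end{equation*}
which takes only the two values $(1+r)/2$ (when $x>q$) and $(1-r)/2$ (when $x\le q$); symmetrically $\mathbb{P}[R(q,r,x)=0]$ takes the values $(1-r)/2$ and $(1+r)/2$. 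Then for any two inputs $x,x'$ and any event $E$, the ratio $\mathbb{P}[R(q,r,x)\in E]/\mathbb{P}[R(q,r,x')\in E]$ is at most $\bigl((1+r)/2\bigr)\big/\bigl((1-r)/2\bigr) = (1+r)/(1-r)$. Taking logarithms gives exactly $\epsilon=\log\bigl((1+r)/(1-r)\bigr)$, and since none of these probabilities is zero (for $r\in(0,1)$), no additive $\delta$ is needed, so $\delta=0$. Because the randomizer takes a single data point as input, it is an $(\epsilon,0)$-randomizer in the sense of the definition of~\cite{8948625}.

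There is no real obstacle here; the only points requiring a little care are (i) noting that it suffices to check the two singleton events rather than all of $\{\emptyset,\{0\},\{1\},\{0,1\}\}$ (the empty set and the whole space give ratio $0$ and $1$ respectively), and (ii) confirming that the extreme ratio $(1+r)/(1-r)$ is actually attained — it is, by taking $x>q$ and $x'\le q$ with $E=\{1\}$ — so $\epsilon$ cannot be improved. One should also implicitly assume $r\in(0,1)$ so that $\epsilon$ is finite and positive; the edge cases $r=0$ (fully random, $\epsilon=0$) and $r=1$ (fully truthful, $\epsilon=\infty$) are consistent with the formula in the limit.
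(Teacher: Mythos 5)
Your proof is correct and follows essentially the same route as the paper's: the paper's proof is a one-line ``exhaustive computation'' showing that the ratio of output probabilities conditioned on $\mathbf{1}_{x>q}=b$ versus $1-b$ lies in $\{(1+r)/(1-r),\,(1-r)/(1+r)\}$, which is exactly the calculation you carry out by conditioning on $u$ to get output probabilities $(1\pm r)/2$. Your added remarks on why singleton events suffice and on the edge cases $r\in\{0,1\}$ are sound but not needed beyond what the paper does.
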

\begin{proof}
see Appendix \ref{proof:LRC}
\end{proof}

The algorithm presented in Algorithm \ref{alg:main} adaptively selects the next randomizer, determined by the parameter $q$ in Algorithm \ref{alg:LRC}, based on its internal state $q_n$. However, it never revisits previous users. As a result, Algorithm \ref{alg:main} satisfies sequential interactive $(\epsilon,0)$-LDP, where $\epsilon=\log\left((1+r)/(1 - r)\right)$ (equivalently, $r=(e^\epsilon-1)/(e^\epsilon+1)=\tanh\left(\epsilon/2\right)$).

Throughout the remainder of this paper, we will use the truthful response rate $r$ to represent the privacy budget, as opposed to the more standard $\epsilon$. This choice is made for the following reasons:

In the context of LDP, it is crucial to ensure understanding and acceptance by end-users who may not possess expertise in the field. The truthful response rate, denoted by $r$, has a more intuitive interpretation. Additionally, $r$ appears in multiple results presented in this paper, and maintaining this form allows for a more direct presentation. If necessary, the results can be easily converted by replacing all instances of $r$ with $\tanh\left(\frac{\epsilon}{2}\right)$. For a conversion table, please refer to Table \ref{reconversiontable}.

\subsection{Consistency}
\par To discuss the asymptotic properties of estimator $Q_n$, we rewrite it as a recursive equation. Let $\{U_n\}$ and $\{V_n\}$ be the i.i.d. Bernoulli sequences with
\begin{align*}
  &\mathbb{P}(U_n = 1) = r,~\mathbb{P}(U_n = 0) = 1-r,\\
  & \mathbb{P}(V_n = 1) = \mathbb{P}(V_n = 0) = 1/2.
\end{align*}
For $q_0 \in \mathbb{R}$,
\begin{equation}
\begin{split}
\label{EQ:Qn}
    q_{n+1} &= q_n +d_n\frac{1-r+2r\tau}{2}\left(\mathbf{1}_{x_{n+1}>q_n}U_n + (1-U_n)V_n\right) \\
    &- d_n\frac{1+r-2r\tau}{2}\left(\mathbf{1}_{x_{n+1}<q_n}U_n + (1-U_n)(1-V_n)\right),
\end{split}
\end{equation}
where the step size $\{d_n\}_{n = 1}^{\infty}$ , satisfies
\begin{equation*}
    \sum_{n = 1}^{\infty}d_n = \infty,\qquad \sum_{n = 1}^{\infty}d_n^2 <\infty.
\end{equation*}
The step size $d_n$ is vital for the convergence of $q_n$, but it has a relatively minor effect on $Q_n$. The following theorem guarantees consistency:
\begin{theorem}\label{THM:consist}
\par For increasing positive number $\gamma_n$, satisfied
\begin{equation*}
    \frac{\gamma_n}{\gamma_{n-1}} = 1+ {\scriptstyle{\mathcal{O}}}(d_n),\qquad \sum_{n = 1}^{\infty}d_n^2\gamma_n^2 <\infty,
\end{equation*}
the $n$-step output $q_n$ enjoys that
\begin{equation*}
    \gamma_n\left|q_n-Q\right| = {\scriptstyle{\mathcal{O}}}_{a.s.}(1).
\end{equation*}
\end{theorem}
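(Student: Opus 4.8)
The plan is to cast the recursion \eqref{EQ:Qn} in the standard Robbins--Monro form $q_{n+1} = q_n + d_n\left(h(q_n) + \xi_{n+1}\right)$, where $h$ is the mean field and $\xi_{n+1}$ is a martingale-difference noise with respect to the natural filtration $\mathcal{F}_n = \sigma(x_1,\dots,x_n,U_1,\dots,U_{n-1},V_1,\dots,V_{n-1})$. First I would compute the conditional expectation of the increment given $\mathcal{F}_n$. Writing $a = (1-r+2r\tau)/2$ and $b=(1+r-2r\tau)/2$, the truthful branch (probability $r$) contributes $a\,\mathbf{1}_{x_{n+1}>q_n} - b\,\mathbf{1}_{x_{n+1}<q_n}$ and the random branch (probability $1-r$) contributes $a\cdot\tfrac12 - b\cdot\tfrac12$ in expectation; since $\mathbb{P}(x_{n+1}>q_n)=1-F(q_n)$ one gets $h(q) = r\bigl(a(1-F(q)) - bF(q)\bigr) + (1-r)\tfrac{a-b}{2}$. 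A short calculation using $a-b = 2r\tau - r = r(2\tau-1)$ and $a+b=1$ should show $h(Q)=0$ exactly (this is how the constants in the algorithm were chosen), and more importantly $h(q) = -r\bigl(F(q)-\tau\bigr)(a+b) + \text{const} = $ something of the form $-c\,(F(q)-F(Q))$ up to the design constants, so that $h$ is sign-definite: $h(q)<0$ for $q>Q$ and $h(q)>0$ for $q<Q$, with $|h(q)|\geq c_0|q-Q|$ locally because the density is positive at $Q$. The noise $\xi_{n+1} = (\text{increment}/d_n) - h(q_n)$ is bounded (the increment lies in $[-b,a]\,d_n$), hence $\mathbb{E}[\xi_{n+1}^2\mid\mathcal{F}_n] \leq M$ for a constant $M$.

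Next I would prove the plain almost-sure convergence $q_n \to Q$ by the standard Robbins--Monro / Robbins--Siegmund supermartingale argument: set $e_n = q_n - Q$, expand $e_{n+1}^2 = e_n^2 + 2d_n e_n(h(q_n)+\xi_{n+1}) + d_n^2(h(q_n)+\xi_{n+1})^2$, take conditional expectation to get $\mathbb{E}[e_{n+1}^2\mid\mathcal{F}_n] \leq e_n^2 + 2d_n e_n h(q_n) + d_n^2 C$, note $e_n h(q_n)\leq 0$, and invoke the Robbins--Siegmund lemma using $\sum d_n^2<\infty$ to conclude $e_n^2$ converges a.s.\ and $\sum d_n |e_n||h(q_n)|<\infty$; combined with $\sum d_n=\infty$ and the sign/lower-bound property of $h$ this forces $e_n\to 0$ a.s. This step is essentially bookkeeping and I would keep it brief.

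The substantive step is upgrading $q_n\to Q$ to the \emph{rate} $\gamma_n|q_n-Q| = {\scriptstyle\mathcal{O}}_{a.s.}(1)$. Here I would work with $w_n = \gamma_n^2 e_n^2$ and estimate its conditional drift. Using $\gamma_n/\gamma_{n-1} = 1 + {\scriptstyle\mathcal{O}}(d_n)$ we have $\gamma_{n+1}^2 = \gamma_n^2(1+{\scriptstyle\mathcal{O}}(d_n))^2 = \gamma_n^2(1+{\scriptstyle\mathcal{O}}(d_n))$, so
\begin{equation*}
\mathbb{E}[w_{n+1}\mid\mathcal{F}_n] \leq \gamma_n^2\bigl(1+{\scriptstyle\mathcal{O}}(d_n)\bigr)\Bigl(e_n^2 + 2d_n e_n h(q_n) + C d_n^2\Bigr).
\end{equation*}
Once $n$ is large enough that $e_n$ is in the region where $e_n h(q_n) \leq -c_0 e_n^2$, the cross term dominates the ${\scriptstyle\mathcal{O}}(d_n)\gamma_n^2 e_n^2$ inflation (because $2c_0 d_n$ beats ${\scriptstyle\mathcal{O}}(d_n)$ for large $n$), giving $\mathbb{E}[w_{n+1}\mid\mathcal{F}_n] \leq w_n(1 - c_1 d_n) + C' d_n^2\gamma_n^2$, and the tail-summability hypothesis $\sum d_n^2\gamma_n^2<\infty$ is exactly what makes the error-accumulation term a convergent series. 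Applying Robbins--Siegmund once more yields that $w_n$ converges a.s.\ to a finite limit, i.e.\ $\gamma_n^2 e_n^2 = \mathcal{O}_{a.s.}(1)$, which is the claim. The main obstacle is this last estimate: one must be careful that the ``large $n$'' threshold after which $e_n$ sits in the linear region is itself a.s.\ finite (it is, by the already-established $e_n\to 0$), and that the $(1+{\scriptstyle\mathcal{O}}(d_n))$ factor from the growing $\gamma_n$ genuinely gets absorbed by the negative drift rather than merely being controlled — this is where the precise matching of the two hypotheses on $\gamma_n$ against $\sum d_n=\infty$ and $\sum d_n^2<\infty$ is used. A convenient way to make the absorption rigorous is to pass to $\log w_n$ or to compare with the deterministic recursion $z_{n+1} = z_n(1-c_1 d_n) + C' d_n^2\gamma_n^2$ and show $z_n$ stays bounded.
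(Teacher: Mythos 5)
Your proposal is correct in substance and identifies the same structural facts the paper relies on: the recursion is a Robbins--Monro scheme whose mean field works out to $h(q)=-r\left(F(q)-\tau\right)$, so that $h(Q)=0$, $(q-Q)h(q)<0$ for $q\neq Q$, and $|h(q)|\gtrsim |q-Q|$ near $Q$ because $f_X(Q)>0$. Where you diverge from the paper is in what you do with this. The paper's proof is a reduction: it rewrites the randomized-response recursion as an ordinary quantile-SGD recursion driven by a modified variable $x_n^*$ (equal to $x_n$ with probability $r$ and to $\pm\infty$ with probability $(1-r)/2$ each), and then cites Theorems 2 and 3 of Dippon and Renz for the a.s.\ rate, the CLT, and the invariance principle in one stroke. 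You instead carry out the argument from scratch via two applications of the Robbins--Siegmund lemma: one on $e_n^2$ to get $q_n\to Q$ a.s., and one on $w_n=\gamma_n^2 e_n^2$ to get the rate, with the hypotheses $\gamma_n/\gamma_{n-1}=1+{\scriptstyle\mathcal{O}}(d_n)$ and $\sum d_n^2\gamma_n^2<\infty$ entering exactly where you say they should (absorption of the $\gamma$-inflation into the negative drift, and summability of the perturbation term). Your route is self-contained and makes visible why the two conditions on $\gamma_n$ are the right ones, at the cost of the localization bookkeeping (the random threshold after which $e_nh(q_n)\le -c_0e_n^2$, handled by stopping at an a.s.\ finite time), which the citation-based proof outsources.

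One small correction: under the paper's notational conventions, ${\scriptstyle\mathcal{O}}_{a.s.}(1)$ means almost-sure convergence to zero, not almost-sure boundedness, so concluding ``$w_n$ converges a.s.\ to a finite limit'' is not yet the claim. Your own argument closes this gap with one more line: the same Robbins--Siegmund application that gives convergence of $w_n$ also gives $\sum_n c_1 d_n w_n<\infty$ a.s., and since $\sum_n d_n=\infty$ the limit of $w_n$ must be zero, whence $\gamma_n|q_n-Q|\to 0$ a.s.\ as required.
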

\begin{proof}
    see Appendix \ref{proof:asym}.
\end{proof}
\par In particular, if $d_n \asymp a/n^{\beta}$, for some constant $a>0$ and $\beta \in (1/2,1)$, then $\gamma_n \asymp n^{\gamma}$ for some $\gamma< \beta - 1/2$, and for the sake of simplicity, we will set the step sizes as $d_n \asymp a/n^{\beta}$. 
\subsection{Asymptotic Normality}
Next, the asymptotic normality will be discussed.
\begin{theorem}\label{THM:normality}
\par If $\beta \in (0,1)$, then
\begin{equation*}
    \sqrt{n}\left(Q_n - Q\right)\xrightarrow{d} N \left(0, \frac{1 - r^2(1-2(1-\tau))^2}{4r^2f_{X}^2(Q)}\right),
\end{equation*}
where $f_X(Q)$ is the value on $Q$ for density function of $X$. 
\end{theorem}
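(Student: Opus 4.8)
The plan is to read the recursion \eqref{EQ:Qn} as a Robbins--Monro stochastic-approximation scheme whose Polyak--Ruppert average is exactly $Q_n$, and then to apply the averaging central limit theorem for such schemes. First I would put \eqref{EQ:Qn} into standard form: writing $A=\tfrac{1-r+2r\tau}{2}$ and $B=\tfrac{1+r-2r\tau}{2}$ (so $A+B=1$) and letting $\mathcal F_n$ be the $\sigma$-field generated by $x_1,\dots,x_n$ and the Bernoulli variables used through step $n$, a direct computation gives $\mathbb P(s=1\mid\mathcal F_n)=r\bigl(1-F(q_n)\bigr)+\tfrac{1-r}{2}$ with $F$ the distribution function of $X$; since the increment equals $Ad_n$ on $\{s=1\}$ and $-Bd_n$ on $\{s=0\}$, the identity $A+B=1$ collapses the conditional mean to $\mathbb E[q_{n+1}-q_n\mid\mathcal F_n]=d_n\,r\bigl(\tau-F(q_n)\bigr)$. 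Thus $q_{n+1}=q_n-d_n h(q_n)+d_n\Delta M_{n+1}$ with mean field $h(q)=r\bigl(F(q)-\tau\bigr)$ and $\{\Delta M_{n+1}\}$ a bounded martingale-difference sequence; $h$ vanishes only at $Q$, satisfies $h(q)(q-Q)>0$ for $q\ne Q$ by monotonicity of $F$, and---using that the density is positive and continuous at $Q$, so $F'(Q)=f_X(Q)$---is differentiable at $Q$ with slope $\lambda:=h'(Q)=rf_X(Q)>0$.

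Next I would pin down the noise level at the root. Because $F(Q)=\tau$, one checks that at $q_n=Q$ the response $s$ is Bernoulli with success probability exactly $B$, so $(q_{n+1}-q_n)/d_n$ takes the values $A$ and $-B$ with probabilities $B$ and $A$; hence $\mathbb E[\Delta M_{n+1}\mid q_n=Q]=0$ and
\[ \mathrm{Var}\bigl(\Delta M_{n+1}\mid q_n=Q\bigr)=AB=\frac{(1-r+2r\tau)(1+r-2r\tau)}{4}=\frac{1-r^2(1-2\tau)^2}{4}=:\sigma_\xi^2 . \]
More generally this conditional variance is a continuous function of $q_n$, so it converges to $\sigma_\xi^2$ on the event $\{q_n\to Q\}$, and boundedness of $\Delta M_{n+1}$ makes the Lindeberg condition automatic.

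Now, with $d_n\asymp a n^{-\beta}$, $\beta\in(1/2,1)$ (the range compatible with $\sum_n d_n=\infty$ and $\sum_n d_n^2<\infty$), consistency $q_n\to Q$ a.s.\ supplied by Theorem \ref{THM:consist}, and the smoothness, global stability, and noise properties above, we land in the hypotheses of the Polyak--Juditsky averaging theorem, which gives
\[ \sqrt n\,(Q_n-Q)\xrightarrow{d}N\!\left(0,\ \frac{\sigma_\xi^2}{\lambda^2}\right)=N\!\left(0,\ \frac{1-r^2(1-2\tau)^2}{4r^2f_X^2(Q)}\right), \]
and $1-2(1-\tau)=-(1-2\tau)$ turns this into the stated variance. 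If a self-contained argument is preferred, set $z_k=q_k-Q$ and linearize $h$ to get $\lambda z_k=\tfrac{z_k-z_{k+1}}{d_k}+\Delta M_{k+1}-\rho_k$ with $\rho_k$ the quadratic Taylor remainder; summing over $k\le n$, Abel summation converts $\sum_k\tfrac{z_k-z_{k+1}}{d_k}$ into negligible boundary terms plus $\sum_k z_k\bigl(\tfrac1{d_k}-\tfrac1{d_{k-1}}\bigr)$, and the standard stochastic-approximation rate $\mathbb E z_k^2=\mathcal O(d_k)$ shows that this sum and $\sum_k\rho_k$ are $o_P(\sqrt n)$; the martingale CLT applied to $n^{-1/2}\sum_{k\le n}\Delta M_{k+1}$ then closes the argument.

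I expect the main obstacle to be the control of these remainder terms. The accumulated quadratic error satisfies $\sum_{k\le n}\rho_k=\mathcal O_P(n^{1-\beta})$, which is $o(\sqrt n)$ precisely because $\beta>1/2$, whereas the Abel/telescoping term satisfies $\sum_{k\le n} z_k\bigl(\tfrac1{d_k}-\tfrac1{d_{k-1}}\bigr)=\mathcal O_P(n^{\beta/2})$, which is $o(\sqrt n)$ precisely because $\beta<1$; obtaining a sharp enough $L^2$ (or almost-sure) rate for $q_k-Q$ to justify both bounds, and verifying that $F$ is genuinely differentiable at $Q$ under only the positive-density assumption, are the points that need care. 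Everything else---the martingale structure of $\Delta M_{n+1}$, its boundedness, and the value of $\sigma_\xi^2$---is obtained by the elementary computations sketched above.
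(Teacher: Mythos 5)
Your reduction is essentially the route the paper takes: its proof also reads \eqref{EQ:Qn} as a Robbins--Monro recursion (after rewriting the randomized response via an auxiliary variable $x_n^*$ equal to $x_n$ w.p.\ $r$ and to $\pm\infty$ w.p.\ $(1-r)/2$ each) and then invokes the standard averaging CLT for SGD by citing Dippon and Renz, rather than re-deriving it. Your explicit computations check out: $A+B=1$ collapses the conditional drift to $d_n\,r\bigl(\tau-F(q_n)\bigr)$, so $\lambda=rf_X(Q)$; at the root the normalized increment is a centered two-point variable with variance $AB=\bigl(1-r^2(1-2\tau)^2\bigr)/4$; and since $(1-2(1-\tau))^2=(1-2\tau)^2$, the ratio $AB/\lambda^2$ is exactly the stated variance.

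The one genuine gap is the range of $\beta$. The theorem asserts the CLT for $\beta\in(0,1)$, and the paper explicitly remarks that this is strictly larger than the consistency range: for $\beta\le 1/2$ the iterate $q_n$ need not converge, yet $Q_n$ is still claimed to be asymptotically normal. Your argument is anchored on almost-sure consistency of $q_n$ from Theorem \ref{THM:consist} and on $\sum_k d_k^2<\infty$, and your own remainder bounds ($\sum_k\rho_k=\mathcal{O}_P(n^{1-\beta})=o(\sqrt n)$ ``precisely because $\beta>1/2$'') make the restriction to $\beta\in(1/2,1)$ explicit, so the sub-case $\beta\in(0,1/2]$ is not covered; closing it requires a version of the averaging CLT that does not route through a.s.\ convergence of the raw iterate (this is what the cited weighted-means results are meant to supply), or a separate tightness/localization argument. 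A secondary point: your quadratic Taylor remainder $\rho_k$ and the identification $F'(Q)=f_X(Q)$ quietly assume more local smoothness of $F$ near $Q$ than the paper's standing assumption (positive density at the target quantile); some such hypothesis needs to be stated explicitly for either proof to go through.
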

\begin{proof}
    see Appendix \ref{proof:asym}.
\end{proof}
\par Noticed that the conditions on $\beta$ in Theorem \ref{THM:consist} and Theorem \ref{THM:normality} are different. It is possible that $q_n$ fails to converge to $Q$, but $Q_n$ still enjoys asymptotic normality. Following Theorem \ref{THM:normality}, one constructs the confidence interval of $Q$, if $f_X(Q)$ can be obtained or estimated by $\widehat{f_X(Q)}$. Denote $z_{1-\alpha}$ as the upper $\alpha-$quantile of standard normal distribution. The infeasible confidence interval with significance level $\alpha$ is:
\begin{equation}
\label{EQ:infeasibleinterval}
\begin{split}
 &\left(Q_n - z_{1-\alpha}\sqrt{n(1 - r^2(1-2(1-\tau))^2)}/(2r\widehat{f_X(Q)}), \right.\\
    &\left.Q_n + z_{1-\alpha}\sqrt{n(1 - r^2(1-2(1-\tau))^2)}/(2r\widehat{f_X(Q)})\right).   
\end{split}
\end{equation}
\par However, obtaining a consistent estimator $\widehat{f_X(Q)}$, such as using non-parametric methods under our differential privacy framework, is not straightforward, since we can only obtain the binary sequence $\mathbf{1}_{x_n > q_{n-1}}$ for protecting privacy, and the original data set $x_1,\dots, x_n$ cannot be accessed directly. 

\par An alternative approach to estimate the nuisance parameter $f_X(Q)$ is through the use of bootstrap methods to simulate the asymptotic distribution. Traditional bootstrap methods that rely on re-sampling are not suitable for the stochastic gradient descent method because of failing to recover the special dependence structure defined in (\ref{EQ:Qn}). 

\par Recently, \citep{fang2018online} proposed online bootstrap confidence intervals for stochastic gradient descent, which involve recursively updating randomly perturbed stochastic estimates. Although this approach performs well when there are no constraints on DP, it requires multiple interactions with the users and will therefore blow up the privacy budget.

\subsection{Inference via Self-normalization}
\par To overcome the difficulties above, we propose a novel inference procedure of quantiles under the LDP framework via self-normalization, which will avoid estimating the nuisance parameter $f_X(Q)$. We hope to construct an estimator that is proportional to the nuisance parameters.  To approach that, we will first establish further theoretical properties of the proposed estimator $Q_n$. Define the process $S_{[nt]} = \sum_{i = 1}^{[nt]}q_i$, $t \in [0,1]$.
\begin{theorem}\label{THM:weakconvergence}
\par If  $\beta \in (0,1)$, then
\begin{equation*}
    n^{-1/2}(S_{[nt]}-nQ) \xrightarrow{d} \frac{\sqrt{(1 - r^2(1-2(1-\tau))^2)}}{2rf_{X}(Q)} W(t),
\end{equation*}
where $W(t)$ is the Brownian motion in $(C[0,1],\mathbb{R})$.
\end{theorem}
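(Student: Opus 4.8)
The plan is to prove a functional central limit theorem (an invariance principle) for the partial-sum process of the SGD iterates $q_i$, building on the tools already set up for Theorems~\ref{THM:consist} and~\ref{THM:normality}. First I would write $q_{n+1} = q_n + d_n(\mu(q_n) + \xi_{n+1})$, where $\mu(\theta)$ is the deterministic ``mean field'' of the update in~\eqref{EQ:Qn} — a function that vanishes at $Q$ with $\mu'(Q) = -c\, f_X(Q)$ for an explicit constant $c>0$ depending on $r,\tau$ — and $\xi_{n+1}$ is a martingale-difference noise with respect to the natural filtration, with conditional variance converging to the constant $\sigma_\xi^2$ appearing (after rescaling) in Theorem~\ref{THM:normality}. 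The key structural fact, which I would isolate as a lemma, is a Ruppert--Polyak-type linearization: $q_n - Q = \frac{1}{n}\sum_{i=1}^{n}(\text{something}) + $ remainder, or more directly that $q_i - Q$ behaves on the partial-sum scale like a rescaled sum of the martingale increments $\xi_j$. Concretely, one shows $\sum_{i=1}^{[nt]}(q_i - Q) = \frac{1}{-\mu'(Q)}\sum_{j=1}^{[nt]}\xi_j + o_P(\sqrt{n})$, uniformly in $t\in[0,1]$.

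The steps, in order: (1) Establish the martingale decomposition of the increments and verify that the remainder terms coming from the nonlinearity $\mu(q_n) - \mu'(Q)(q_n - Q)$ and from the step-size variation are negligible at the $\sqrt{n}$ scale — here Theorem~\ref{THM:consist}, which gives $\gamma_n|q_n - Q| = o_{a.s.}(1)$ with $\gamma_n$ of polynomial order, is exactly what controls these remainders, since it upgrades consistency to a rate fast enough that squared/higher-order terms summed over $[nt]$ are $o(\sqrt n)$. (2) Apply an Abel summation / telescoping argument to the recursion to express the partial sum $S_{[nt]} - [nt]Q$ in terms of a weighted martingale array in the $\xi_j$'s, checking that the weights collapse to a constant in the limit. (3) Invoke a martingale functional central limit theorem (Donsker-type, e.g. the martingale invariance principle) for the array: verify the conditional Lindeberg condition (the increments are bounded, so this is automatic) and that the conditional-variance process $\frac{1}{n}\sum_{j=1}^{[nt]}\mathbb{E}[\xi_j^2\mid\mathcal F_{j-1}] \to \sigma_\xi^2\, t$ uniformly in probability — again using consistency of $q_n$ so that the conditional variance, which depends on $\mathbf 1_{x_{j}>q_{j-1}}$-type quantities evaluated near $Q$, stabilizes. (4) Combine with the continuous-mapping/Slutsky reasoning and replace $[nt]Q$-scale discrepancies (off by at most one term, $O(d_n)$ in size) to conclude $n^{-1/2}(S_{[nt]} - nQ) \xrightarrow{d} \frac{\sqrt{1 - r^2(1-2(1-\tau))^2}}{2 r f_X(Q)} W(t)$ in $C[0,1]$, identifying the limiting constant from the ratio $\sigma_\xi / (-\mu'(Q))$, which must match the standard deviation in Theorem~\ref{THM:normality} since taking $t=1$ recovers that statement.

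The main obstacle I expect is Step~(1)--(2): getting a \emph{uniform-in-$t$} control of the linearization remainder on the partial-sum scale, rather than just the pointwise $t=1$ bound that suffices for Theorem~\ref{THM:normality}. This requires a maximal inequality for the partial sums of the remainder process (Doob's inequality for the martingale part, plus a deterministic monotone bound for the drift-remainder part using the $\gamma_n$-rate from Theorem~\ref{THM:consist}), and one must be careful that the relatively weak convergence rate of $q_n$ under $\beta\in(0,1)$ — where $q_n$ may not even converge to $Q$ in the borderline regime, as the authors note — is still enough once one passes to the averaged quantity $Q_n$ and its partial sums. Tightness in $C[0,1]$ then follows from the martingale FCLT for the main term together with the uniform negligibility of the remainder; identification of the limit is routine once the conditional variance limit is pinned down.
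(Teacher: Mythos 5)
Your proposal is correct in substance, but it takes a genuinely more self-contained route than the paper. The paper's entire proof of Theorems~\ref{THM:consist}--\ref{THM:weakconvergence} is a reduction-plus-citation: it rewrites the randomized-response recursion~(\ref{EQ:Qn}) as an ordinary Robbins--Monro recursion driven by an auxiliary variable $x^*_n$ equal to $x_n$ with probability $r$ and to $\pm\infty$ with probability $(1-r)/2$ each, observes that the target equation $h(Q)=0$ is unchanged, and then invokes standard averaged-SGD results (Theorems~2 and~3 of \citealp{dippon1998globally}) for consistency, the CLT, and the invariance principle. You instead propose to prove the functional CLT directly: martingale-difference decomposition of the increments, a Ruppert--Polyak linearization $\sum_{i\le [nt]}(q_i-Q)=(-\mu'(Q))^{-1}\sum_{j\le[nt]}\xi_j+o_P(\sqrt n)$ uniformly in $t$, and a martingale invariance principle with the Lindeberg condition trivially satisfied by boundedness. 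This is essentially the proof that lives inside the results the paper cites, and your identification of the constants is right: $\mu'(Q)=-rf_X(Q)$ and $\sigma_\xi^2=\bigl(1-r^2(1-2\tau)^2\bigr)/4$, whose ratio $\sigma_\xi/(-\mu'(Q))$ reproduces the stated limit. What your route buys is transparency and a uniform-in-$t$ remainder control made explicit (the maximal-inequality step), which the paper never writes down; what the paper's route buys is brevity and the reuse of theorems whose hypotheses have already been checked in the literature. One caveat: your Step~(1) leans on Theorem~\ref{THM:consist} for the rate $\gamma_n|q_n-Q|=o_{a.s.}(1)$, but that theorem requires $\sum_n d_n^2\gamma_n^2<\infty$, i.e.\ effectively $\beta>1/2$, whereas Theorem~\ref{THM:weakconvergence} is asserted for all $\beta\in(0,1)$; you flag this borderline regime yourself, but to cover $\beta\le 1/2$ you would need to obtain the remainder control from the averaging argument itself (as the cited SGD-averaging literature does) rather than from Theorem~\ref{THM:consist}. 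With that adjustment the argument is complete.
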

\begin{proof}
    see Appendix \ref{proof:asym}.
\end{proof}
\par Noticed that Theorem \ref{THM:normality} is the special case in Theorem \ref{THM:weakconvergence} when $t= 1$. Then, following Theorem \ref{THM:weakconvergence}, we define the self-normalizer: 
\begin{equation*}
     N_n = \int_{0}^{1}
    \left(S_{[nt]} - [nt]Q_n\right)^2dt,
\end{equation*}
By the continuous mapping theorem, we can derive:
\begin{align*}
    \frac{n^{-1/2}(S_{n}- nQ)}{\sqrt{n^{-1}N_n}} \xrightarrow{d} \mathcal{S}: = \frac{W(1)}{\sqrt{\int_0^1\left(W(t) - tW(1)\right)^2dt}}, 
\end{align*}
where the asymptotical distribution $\mathcal{S}$ is not associated with any unknown parameters, and its quantile can be computed by Monte Carlo simulation. Therefore, we have constructed an asymptotical pivotal quantity.  Denote $\mathcal{U}_{1-\alpha}$ the $1-\alpha$ quantile of $\mathcal{S}$, the $1-\alpha$ self-normalized confidence interval of $Q$ is constructed by:
\begin{align}\label{SNinerval}
    \left(Q_n - n^{-1}\mathcal{U}_{1-\alpha/2}\sqrt{N_n}, Q_n + n^{-1}\mathcal{U}_{1-\alpha/2}\sqrt{N_n}\right).
\end{align}
\par As noted by \citep{shao2015self}, the distribution of $\mathcal{S}$ has a heavier tail than that of the standard normal distribution, which is analogous to the heavier tail of $t-$distribution compared to the standard normal distribution, resulting in a wider but not conservative corresponding confidence interval. However, the average width of the confidence interval constructed through self-normalization is not excessively large when compared to the infeasible confidence interval, as demonstrated by numerical experiments in Figure  \ref{fig:estimatetrend}. Furthermore, the construction of an asymptotic pivotal quantity is not unique. See Appendix \ref{alternativeselfnorm} for examples of other possibilities.

Whether there are theoretical advantages between the different constructions of self-normalizer is still open to discussion, but according to \citep{lee2022fast}, the proposed self-normalizer can be computed in a fully online fashion and is computationally efficient, as outlined in Algorithm \ref{alg:main} and \ref{alg:infer}. The algorithm only needs to store
a single integer $n$ and four float numbers: $ v_n^a, v_n^b, q_n, Q_n$ and conduct only a dozen of arithmetic operations.

\subsection{Discussion of Optimality}
\par In this subsection, we will discuss the optimality of the proposed algorithm. To generalize the setting, we consider all binary random response-based sequential interactive mechanisms.
The random response mechanism can be written as the following $K:\{0,1\}\rightarrow\{0,1\}$:

$$K(x) =       \begin{cases}
                          0, &  \text{w.p.}~~ (1-r)/2, \\
                          1, &  \text{w.p.}~~ (1-r)/2,
\\           x, &  \text{w.p.}~~r .
                        \end{cases}$$

Let $\{T_1,\cdots,T_n\}$ be a collection of binary query functions, which means $T_i(x)=\mathbf{1}_{x\in C_i}$, for some subset $C\subset \mathbb{R}$. In the sequential interactive LDP setting, the curator will generate its output based on the transcript $\{\{K\circ T_1(x_1),\cdots,K\circ  T_n(x_n)\},\{C_1,\cdots,C_n\}\}$ and the choice of $C_i$ may depend on the transcript up to this point:$\{\{K\circ T_1(x_1),\cdots,K\circ  T_{i-1}(x_{i-1})\},\{C_1,\cdots,C_{i-1}\}\}$. Notice that the Algorithm \ref{alg:LRC} is a special case where $C_i  = \{z: z\geq q_{i-1}\}$, and $q_{i-1}$ is given by 
\begin{align*}
  \sum_{j=1}^{i-1}T_{j}(x_j)\frac{1-r+2\tau r}{2}d_j-(1-T_{j}(x_j))\frac{1+r-2\tau r}{2}d_j.
\end{align*}

We aim to determine a lower bound for the estimation variance. Therefore, any lower bounds derived under specific conditions also serve as a general lower bound for the estimation variance. To demonstrate this, we will present a pair of distributions with distinct medians that are, to the best of our knowledge, the most indistinguishable given randomized binary queries.

Define:
\begin{align}\label{EQ:hypotheses0}
    H_0: x_i \sim Laplace(1)  \mbox{ vs. } H_1: x_i \sim Laplace(1)+\epsilon_n
\end{align}
Let $\epsilon=\log\left[(e^{\frac{1}{\sqrt{n}}} (r+1)+r-1)/(e^{\frac{1}{\sqrt{n}}} (r-1)+r+1)\right]$. Simple computation yields that for any $(a,b)\in\{0,1\}^2$

\begin{equation}\label{EQ:KTDP2}
    \frac{\mathbb{P}(K\circ T_i(x_i)=a|H_b)}{\mathbb{P}(K\circ T_i(x_i)=a|H_{1-b})}\leq \frac{e^{\epsilon_n} (r+1)-r+1}{-e^{\epsilon_n} (r-1)+r+1}=e^{\sqrt{\frac{1}{n}}}.
\end{equation}

Interestingly, if we consider the truth $H\in\{H_0,H_1\}$ as a data set containing only one data point,  (\ref{EQ:KTDP2}) shows that $K\circ T_i$ is $1/\sqrt{n}$-DP. Notice that the transcript is a $n$-fold adaptive composition \citep{kairouz2015composition} of $1/\sqrt{n}$-DP mechanisms.
By Theorem 8 \citep{dong2019gaussian}, the transcript and all post-processing of it (Proposition 4; \citep{dong2019gaussian}) asymptotically satisfies the Gaussian Differential Privacy condition with $\mu=1$ (or briefly $1$-GDP).

We will now examine the limit on the best possible variance imposed by the $1$-GDP condition. Denote the estimator of median as $\hat{\theta}_n$. First, we will consider asymptotically normal, unbiased, shift-invariant estimators of the median. By restricting our discussion to unbiased, shift-invariant estimators, we ensure that no estimator has an unfair advantage by favoring specific values. Under the null hypothesis, for the standard deviation $\sigma_n$ of $\hat{\theta}_n$, one has that
\begin{equation*}
\frac{\hat{\theta}_n}{\sigma_n}\xrightarrow{d} N(0,1),  
\end{equation*}
and under the alternative hypothesis,
\begin{equation*}
    \frac{\hat{\theta}_n-\epsilon_n}{\sigma_n}\xrightarrow{d}
    N(0,1).
\end{equation*}

The $1$-GDP condition implies that for sufficiently large $n$, $\epsilon_n/\sigma_n\leq 1$ (
see Appendix \ref{APfinal}). By plugging in the values $\epsilon_n= (r\sqrt{n})^{-1}+\mathcal{O}(n^{-3/2})$ and $1/2=f(F^{-1}(1/2))$, we deduce that:
\begin{equation*}
    \sigma_n\geq \frac{1}{2 r\sqrt{n}f(F^{-1}(1/2))}+\mathcal{O}\left(n^{-1}\right),
\end{equation*}
which gives us an asymptotic lower bound of the variance: $(4 r^2nf^2(F^{-1}(1/2)))^{-1}$. This lower bound matches the asymptotic variance obtained in Theorem \ref{THM:normality}, showing the optimality of our approach. Although most estimators we are interested in have an asymptotically normal distribution, we wish to generalize the minimal variance result to other families as the theorem below.

\begin{theorem}\label{THM:optimal}
\par If $\hat{\theta}_n$ is a median estimator based on the random response of binary-based sequential interactive inquiries such that:

\begin{equation*}
\frac{\hat{\theta}_n-F^{-1}(1/2)}{\sigma_n}
\xrightarrow{d} G
\end{equation*}
where $G$ has a log-concave density $f_G(x) \propto \mathrm{e}^{-\varphi(x)}$ on $\mathbb{R}$ such that $\varphi(x)=\varphi(-x)$, $\mathbb{E}\left[\left(\varphi'(G)\right)^2\right]<+\infty$,  and $\mathbb{E}\left[G^2\right]=1$.

Then,
\begin{equation*}
    \sigma_n\geq \frac{1}{2 r\sqrt{n}f(F^{-1}(1/2))}+\mathcal{O}\left(n^{-1}\right).
\end{equation*}
\end{theorem}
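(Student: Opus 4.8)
The plan is to follow the same route as the sketch given just above the theorem for the Gaussian case, but replace the normal tail estimate with a general log-concave hypothesis-testing bound coming from the $1$-GDP condition. Recall that the transcript of any binary random-response based sequential interactive mechanism, applied to the pair $H_0,H_1$ from \eqref{EQ:hypotheses0}, asymptotically satisfies $1$-GDP by the composition argument in \eqref{EQ:KTDP2} together with Theorem 8 of \citep{dong2019gaussian}. Since $\hat\theta_n$ is a post-processing of the transcript, the induced two-point testing problem $H_0$ vs.\ $H_1$ inherits $1$-GDP, meaning that for any rejection region the trade-off between type I and type II errors is no better than that of testing $N(0,1)$ vs.\ $N(1,1)$; equivalently, the optimal testing trade-off function is (asymptotically) bounded below by $G_1 = \Phi(\Phi^{-1}(1-\alpha)-1)$.

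The key steps, in order, are as follows. First I would use the two convergence statements
\begin{equation*}
\frac{\hat\theta_n}{\sigma_n}\xrightarrow{d} G \quad\text{under }H_0,\qquad
\frac{\hat\theta_n-\epsilon_n}{\sigma_n}\xrightarrow{d} G \quad\text{under }H_1,
\end{equation*}
to build an explicit family of tests: reject $H_0$ when $\hat\theta_n/\sigma_n > c$. Under $H_0$ this has asymptotic level $1-F_G(c)$, and under $H_1$ its asymptotic power is $1-F_G(c-\epsilon_n/\sigma_n)$. Second, I would feed this test into the $1$-GDP trade-off bound: the GDP constraint forces, for every threshold $c$,
\begin{equation*}
1-F_G\!\left(c-\frac{\epsilon_n}{\sigma_n}\right)\ \le\ \bar\Phi\!\left(\Phi^{-1}\!\big(F_G(c)\big)-1\right)+o(1),
\end{equation*}
i.e.\ this particular family of tests cannot beat the $N(0,1)$-vs-$N(1,1)$ trade-off. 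Third — and this is the crux — I would show that this inequality, holding for all $c$ (in particular near $c$ where $f_G$ is largest), together with the symmetry $\varphi(x)=\varphi(-x)$, the moment normalization $\mathbb E[G^2]=1$, and the finite Fisher-information-type condition $\mathbb E[(\varphi'(G))^2]<\infty$, forces $\limsup_n \epsilon_n/\sigma_n \le 1$. Concretely, I expect to differentiate or take a limiting slope of the trade-off inequality at a convenient point: the derivative of the log-concave trade-off curve is controlled by $f_G$, and the constraint that it dominates the Gaussian trade-off curve, whose slope involves $\Phi^{-1}$, pins down the separation $\epsilon_n/\sigma_n$. The log-concavity of $f_G$ is what guarantees the trade-off function is convex and that its behaviour is governed by a single scalar (the analogue of the Gaussian $\mu$); the condition $\mathbb E[(\varphi'(G))^2]<\infty$ is the Cramér–Rao / van Trees ingredient that lets me relate the achievable separation to the variance normalization $\mathbb E[G^2]=1$. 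Once $\epsilon_n/\sigma_n\le 1+o(1)$ is established, I finish exactly as in the Gaussian sketch: substitute $\epsilon_n=(r\sqrt n)^{-1}+\mathcal O(n^{-3/2})$ and $f(F^{-1}(1/2))=\tfrac12$ implicitly through the Laplace$(1)$ density at its median, giving
\begin{equation*}
\sigma_n\ \ge\ \frac{1}{2r\sqrt n\, f(F^{-1}(1/2))}+\mathcal O(n^{-1}).
\end{equation*}

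The main obstacle is the third step: upgrading ``this one-parameter family of threshold tests respects the $1$-GDP trade-off'' into the clean scalar bound $\epsilon_n/\sigma_n\le 1$ for an arbitrary symmetric log-concave limit $G$, rather than just the Gaussian. For $G$ Gaussian this is immediate because the threshold test is itself the likelihood-ratio test and the trade-off curve is exactly the GDP curve; for general $G$ one must argue that the symmetric log-concave structure plus the second-moment normalization cannot ``cheat'' the GDP bound — essentially a statement that among mean-zero, variance-one, symmetric log-concave distributions, the detectable shift under a $\mu$-GDP constraint is maximized (in the relevant asymptotic sense) by the Gaussian, or at least never exceeds the Gaussian value. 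I would handle this by a van Trees / Hammersley--Chapman--Robbins style inequality: the $1$-GDP constraint yields an upper bound on the $\chi^2$- or KL-divergence between the laws of $\hat\theta_n$ under $H_0$ and $H_1$, these divergences converge to the corresponding divergences between $G$ and its shift $G(\cdot-\epsilon_n/\sigma_n)$, and a second-order Taylor expansion of that divergence in the small shift $\epsilon_n/\sigma_n$ produces the Fisher information $\mathbb E[(\varphi'(G))^2]$ as leading coefficient; comparing with the Gaussian benchmark (whose Fisher information equals $1$ under unit variance, with equality characterizing the Gaussian) delivers the inequality. Routine asymptotic bookkeeping — uniform integrability to pass the moment and divergence limits through the weak convergence, and the $\mathcal O(n^{-3/2})$ remainder in $\epsilon_n$ — I would defer to the appendix.
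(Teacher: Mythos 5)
Your high-level architecture coincides with the paper's: the per-query bound \eqref{EQ:KTDP2} plus the GDP central limit theorem of \citep{dong2019gaussian} makes the transcript asymptotically $1$-GDP, post-processing transfers this to $\hat\theta_n$, the problem reduces to showing $\limsup_n \epsilon_n/\sigma_n\le 1$, and the conclusion follows by substituting $\epsilon_n=(r\sqrt n)^{-1}+\mathcal{O}(n^{-3/2})$ and the Laplace density at its median. Be aware, though, that the paper does not itself prove the log-concave generalization: it proves the Gaussian case by the elementary contradiction argument of Appendix \ref{APfinal} and then invokes Theorem 1.2 of \citep{NEURIPS2021_7c2c48a3} as a black box for the statement that a $1$-GDP ``estimator'' with symmetric log-concave limit $G$, $\mathbb{E}[(\varphi'(G))^2]<\infty$ and $\mathbb{E}[G^2]=1$ cannot have normalized separation exceeding $1$. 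Your instinct to obtain that ingredient from a Fisher-information bound combined with Cram\'er--Rao ($I(G)\,\mathrm{Var}(G)\ge 1$, with equality characterizing the Gaussian) is the right one, so you are reconstructing the cited lemma rather than taking a different route.

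The genuine gap is in how you extract the scalar bound. You propose to Taylor-expand $\mathrm{KL}\bigl(G\,\Vert\,G(\cdot-s)\bigr)$ (or $\chi^2$) to second order in $s$ and read off $\tfrac{s^2}{2}\mathbb{E}[(\varphi'(G))^2]$, then compare with the $1$-GDP divergence budget. But $s=\epsilon_n/\sigma_n$ is not small --- at the claimed optimum it equals $1$ --- so the quadratic expansion is not a usable inequality: convexity of $s\mapsto\mathrm{KL}(G\Vert G_s)$ (which log-concavity does give) does \emph{not} imply $\mathrm{KL}(G\Vert G_s)\ge\tfrac{s^2}{2}I(G)$ away from $s=0$, since a convex function can grow far more slowly than its second-order Taylor polynomial at its minimum. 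Similarly, evaluating your threshold-test trade-off inequality at the single symmetric point $c=s/2$ yields only $F_G(s/2)\le\Phi(1/2)$, which does not force $s\le1$ for all admissible $G$. The missing idea is to exploit the \emph{location-family} structure with a continuum of alternatives: testing $Laplace(1)$ against $Laplace(1)+t\epsilon_n$ for each $t\in(0,1]$ composes to $t$-GDP, hence $T(G,G_{ts})\ge G_t$ for all small $t$; local asymptotic normality of the log-concave location family (this is where $\mathbb{E}[(\varphi'(G))^2]<\infty$ enters) gives $T(G,G_u)\approx G_{u\sqrt{I(G)}}$ as $u\to0$, so comparing slopes at $t=0$ yields $s\sqrt{I(G)}\le1$, and Cram\'er--Rao finishes with $s\le I(G)^{-1/2}\le\sqrt{\mathbb{E}[G^2]}=1$. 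Your single-pair, fixed-shift divergence comparison cannot produce this local statement. A secondary issue: the GDP central limit theorem controls trade-off functions, not KL or $\chi^2$; pointwise convergence of trade-off functions does not bound those divergences without extra tail control, so the argument should be run on the trade-off functions themselves (where lower semicontinuity under weak convergence does go the right way). The cleanest repair is simply to cite the external theorem, as the paper does, or to carry out the continuum-of-alternatives argument above in full.
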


The minimal variance result can be attributed to two factors. In Appendix \ref{APfinal}, we demonstrate that asymptotic GDP imposes a condition on the variance of estimators that follow a normal distribution. This condition serves as a lower bound for $1$-GDP estimators, without relying on any specific mechanism assumption. Secondly, the relaxation from the assumption of normality to milder conditions on the function $G$ is a consequence of Theorem 1.2 in \citep{NEURIPS2021_7c2c48a3}. This theorem establishes that among all $\mu$-GDP estimators satisfying the aforementioned conditions, the variance is lower bounded by $1/\mu^2$. This lower bound is attainable when the underlying distribution is normal.

\section{Experiments} \label{sec:experiments}
We evaluate the performance of our algorithms using a variety of distributions. The data come from four cases: standard Normal $N(0,1)$, Uniform $U(-1,1)$, standard Cauchy $C(0,1)$, and PERT distribution \citep{clark1962pert} with probability density function:
\begin{equation*}
f(x) = 0.625(1-x)(1+x)^3, ~~ x\in (-1,1).
\end{equation*}
These cases represent situations with heavy tails, compact or non-compact support, and asymmetric distributions commonly found in practice, as shown in Figure \ref{fig:densiyu}.

\begin{figure}[h!]
    \centering
    \includegraphics[width= 7.5cm,height=5.5cm]{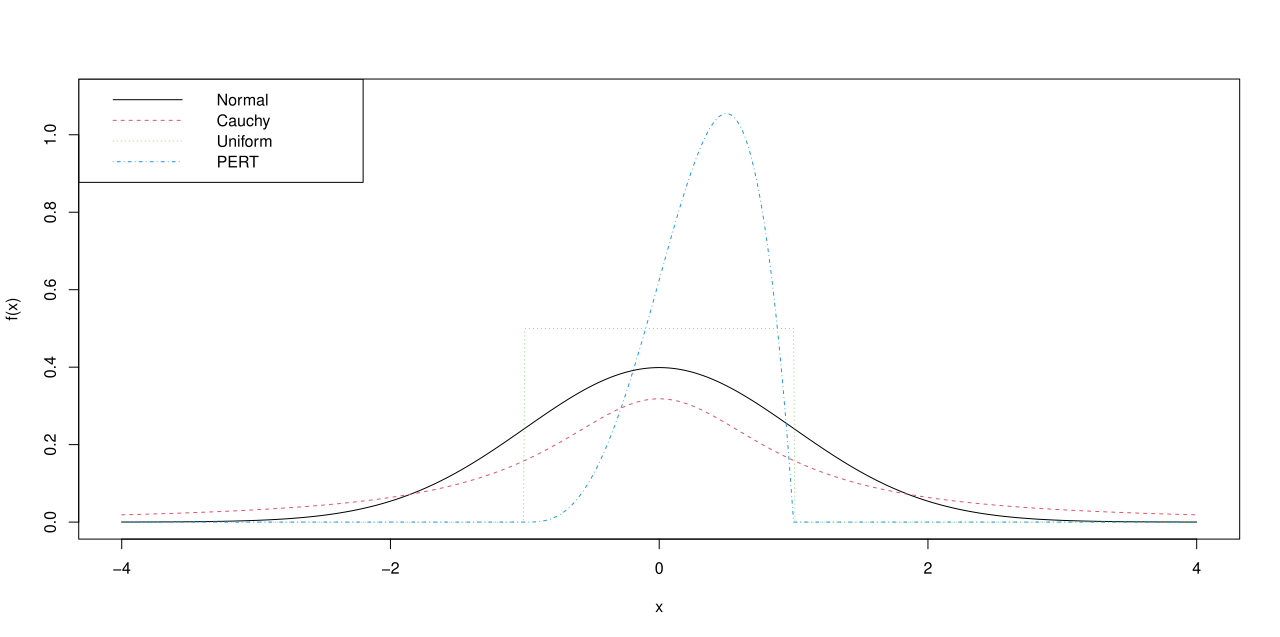}
    \caption{Plot of the density function, where the types of lines represent different distribution, solid: Normal, dashed: Cauchy, dotted: Uniform, dot-dash: PERT.  }
    \label{fig:densiyu}
\end{figure}

The target quantiles are $\tau = 0.3, 0.5, 0.8$, and the truthful response rate $r = 0.25, 0.5, 0.9$, which  the privacy budget is 
$\epsilon = \log(1+2r/(1-r))$ corresponding to $0.51, 1.09, 2.94$ respectively. We use the step sizes  $d_n = 2/(n^{0.51}+100)$ for all experiments, which satisfies the assumptions of Theorem \ref{THM:normality} and \ref{THM:weakconvergence}. The range of sample size $n$ is $(10000,400000)$, the initial value $q_0 = 0$, and the number of replication is $10000$. The results from different sample sizes are independently conducted from scratch to eliminate the correlation among experiments.  

To show the consistency of the proposed estimator $Q_n$, Figure \ref{Figure2} displays the box plots of estimator $Q_n$ under Normal distribution with sample size $n = 10000,\dots,50000$. As the sample size increases, the estimation becomes closer to the true values $Q$, the corresponding standard errors decay across all settings, and the truthful response rate leads to significantly better performance in small finite sample sizes but has diminishing effects afterward. Meanwhile,  we can also see that the proximity between the true target value and the initialization $0$ is beneficial to early performances. But in an asymptotic view, the proposed algorithm is insensitive to the initial value selection.

\par We also demonstrate the empirical coverage rate and mean absolute error of the developed method in Table \ref{tab:normal}. The empirical coverage rate of the proposed method becomes closer to the nominal confidence level as the sample size increases in most cases and the mean absolute error decreases to zero. The corresponding figures and tables of other distributions can be found in Appendix \ref{Sec:c}, which describes a similar phenomenon. 

\par Figure \ref{fig:QQplot} investigates the performance of the proposed confidence interval in other nominal levels. One can discover that the curves of the empirical coverage rate are getting closer to $y = x$ uniformly, as sample size increases in all privacy budget settings, which reveals the performance of the proposed method is irrelevant to the pre-determined significance level. It is worth noting that when $r = 0.25$, the effective sample size is $1/16$ of the original one, yet the performance of the proposed method remains excellent, which strongly supports the asymptotic theory.

\begin{figure}[h!]
  \centering  \includegraphics[width= 8.4cm,height=8.4cm]{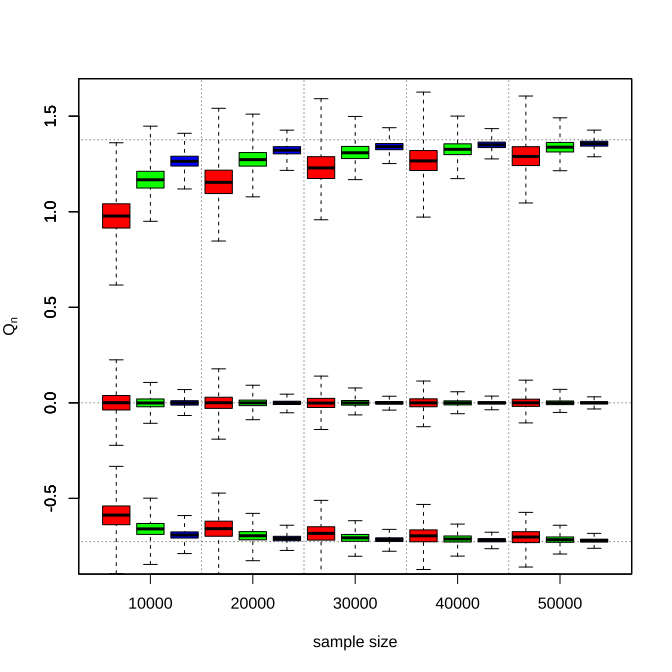}
  \caption{Box-plot of estimator $Q_n$ for different target quantiles of Normal distribution. In each sample size divided by a vertical dotted line, the three boxes establish results with different privacy budgets by left: $r = 0.25$, middle $r = 0.5$, and right: $r = 0.9$. The horizontal dashed lines represent the true value $Q$ in $\tau = 0.3$, $0.5$, $0.8$ from the bottom to the top.}\label{Figure2}
\end{figure}

\begin{figure}
    \centering
    \subfigure[Left: $n = 10000$. Right: $n = 50000$]{
    \includegraphics[width= 4.3 cm,height=3.1cm]{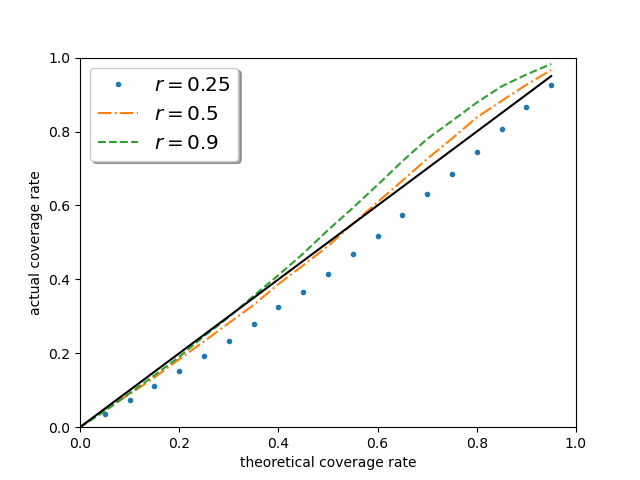}
    \includegraphics[width= 4.3 cm,height=3.1cm]{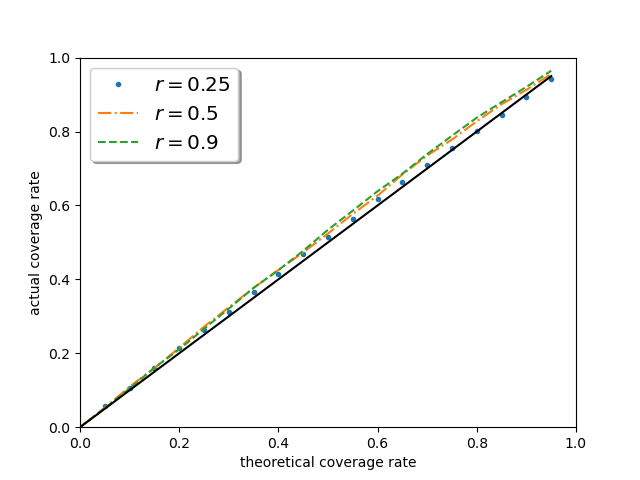}
    }
    \subfigure[Left: $n = 100000$. Right: $n = 200000$]{
    \includegraphics[width= 4.3 cm,height=3.1cm]{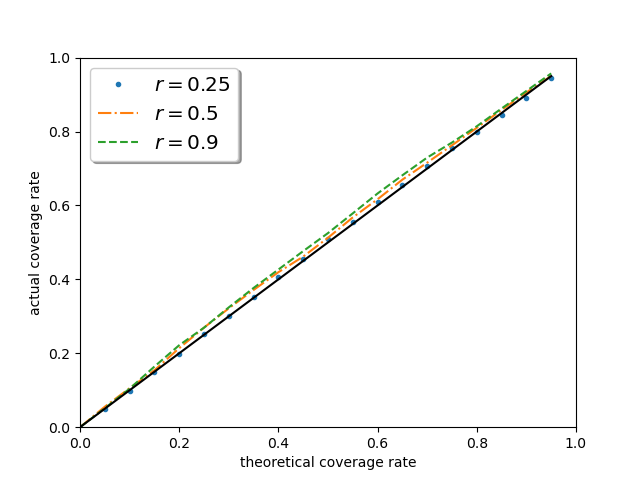}
    \includegraphics[width= 4.3 cm,height=3.1cm]{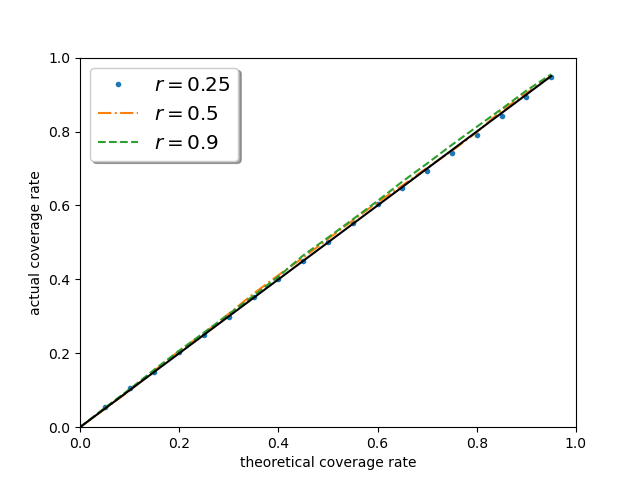}
    }
    \caption{The curve of the empirical coverage rate of proposed confidence interval (\ref{SNinerval}) with nominal significance level, when the data are Normal and target quantile $\tau = 0.3$ under different privacy budget (dotted $r = 0.25$, dot-dash $r=0.5$ and dashed $r =0.9$).}
    \label{fig:QQplot}
\end{figure}

\begin{table}[h!]
  \centering
  \caption{Empirical results of coverage rate(mean absolute error) of proposed confidence interval (\ref{SNinerval}) (estimator $Q_n$) with data collected from Normal.}
  \resizebox{85mm}{!}{
    \begin{tabular}{ccccc}\toprule
          $n$   & $\tau$ & $r = 0.25$ & $r=0.5$ & $r=0.9$ \\\hline
    \multirow{3}[0]{*}{10000} & 0.3   & 0.926(0.069) & 0.965(0.034) & 0.982(0.018) \\
          & 0.5   & 0.834(0.037) & 0.897(0.019) & 0.911(0.011) \\
          & 0.8   & 0.962(0.121) & 0.992(0.058) & 0.999(0.031) \\\hline
    \multirow{3}[0]{*}{20000} & 0.3   & 0.936(0.041) & 0.958(0.020) & 0.971(0.011) \\
          & 0.5   & 0.888(0.027) & 0.915(0.014) & 0.936(0.008) \\
          & 0.8   & 0.965(0.063) & 0.984(0.030) & 0.994(0.016) \\\hline
    \multirow{3}[0]{*}{40000} & 0.3   & 0.943(0.025) & 0.958(0.013) & 0.967(0.007) \\
          & 0.5   & 0.910(0.020) & 0.931(0.010) & 0.937(0.006) \\
          & 0.8   & 0.966(0.035) & 0.978(0.017) & 0.984(0.009) \\\hline
    \multirow{3}[0]{*}{100000} & 0.3   & 0.946(0.015) & 0.954(0.007) & 0.958(0.004) \\
          & 0.5   & 0.929(0.013) & 0.944(0.006) & 0.941(0.004) \\
          & 0.8   & 0.954(0.019) & 0.965(0.009) & 0.973(0.005) \\\hline
    \multirow{3}[0]{*}{200000} & 0.3   & 0.947(0.010) & 0.951(0.005) & 0.956(0.003) \\
          & 0.5   & 0.942(0.009) & 0.949(0.004) & 0.947(0.002) \\
          & 0.8   & 0.956(0.013) & 0.960(0.006) & 0.964(0.003) \\\hline
    \multirow{3}[0]{*}{400000} & 0.3   & 0.945(0.007) & 0.953(0.004) & 0.948(0.002) \\
          & 0.5   & 0.942(0.006) & 0.949(0.003) & 0.944(0.002) \\
          & 0.8   & 0.952(0.009) & 0.957(0.004) & 0.958(0.002) \\
\toprule
    \end{tabular}}
  \label{tab:normal}%
\end{table}%

\section{Conclusion and Future Works}

\par In this paper, we proposed a novel algorithm for
estimating population quantiles under the settings of LDP. The core design idea of the algorithm is based on using dichotomous inquiry. The proposed estimator enjoys excellent theoretical properties, including consistency, asymptotic normality, and optimality in some special cases. Importantly, by applying the technique of self-normalization to cancel out the nuisance parameters, we can construct confidence intervals of population quantiles for statistical inference. Finally, our algorithm is designed in an online setting, making it suitable for handling large streaming data without the need for data storage. Extensive simulation studies reveal a positive
confirmation of the asymptotic theory.

Despite the contributions above, this article still leaves many exciting questions unanswered, which opens many avenues for future research. A general tight lower bound for other quantiles under our setting is still undetermined, and we have yet to consider other variants of LDP (e.g. full-interactive). Other directions include exploring data that is not independently and identically distributed, such as time series or spatial series data. Additionally, the quantile of interest may be influenced by other covariates, leading to the study of LDP quantile regression. This paper focuses on estimating quantiles for a specific sample size $n$, with the potential for developing consistent bounds, resulting in the transition from quantile confidence intervals to confidence sequences. 

\nocite{langley00}

\bibliography{example_paper}
\bibliographystyle{icml2023}

\newpage
\appendix
\onecolumn


\section{Additional figures and tables}\label{Sec:c}

\begin{figure}[H]
    \centering
    \includegraphics[width= 8 cm,height=6cm]{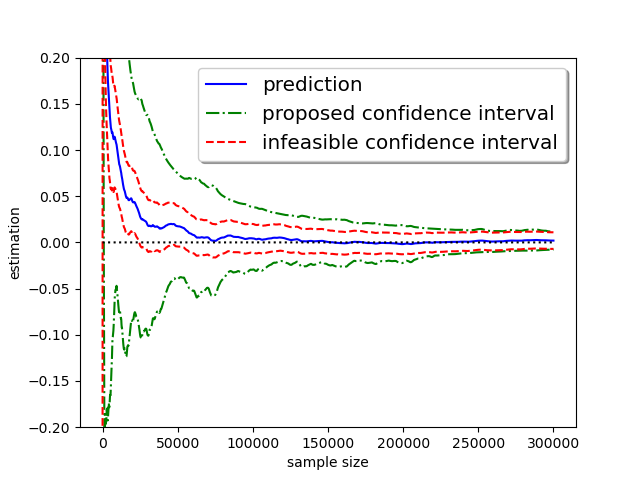}
    \caption{An alternative sample trajectory of estimator $Q_n$ using a different initialization $q_0=1$. }
    \label{fig:alterfigure1}
\end{figure}
\begin{figure}[H]
    \centering
    \includegraphics[width= 8 cm,height=6cm]{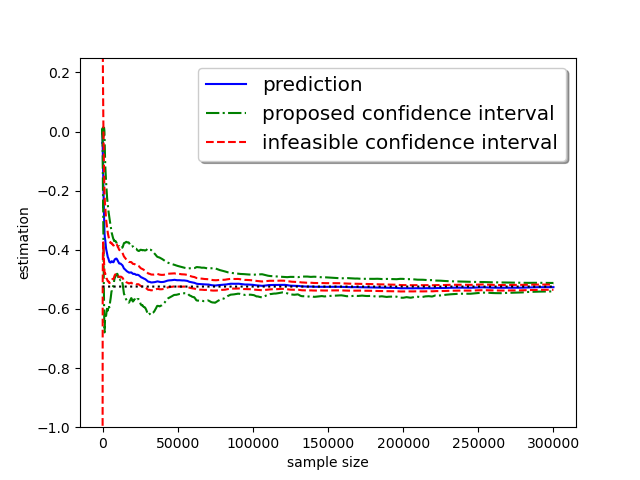}
    \caption{An alternative sample trajectory of estimator $Q_n$ using a different target quantile $\tau=0.3$. }
    \label{fig:alterfigure2}
\end{figure}

\begin{figure}[h!]
  \centering  \includegraphics[width= 8.4cm,height=8.4cm]{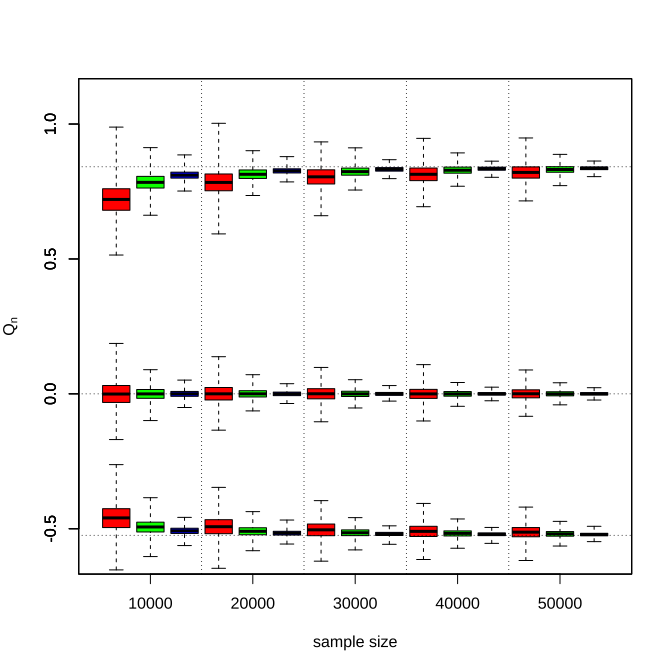}
  \caption{Box-plot of estimator $Q_n$ for different target quantile of Cauchy distribution. In each sample size divided by a vertical dotted line, the three boxes establish results with different privacy budgets by left: $r = 0.25$, middle $r = 0.5$, and right: $r = 0.9$. The horizontal dashed lines represent the true value $Q$ in $\tau = 0.3$, $0.5$, $0.8$ from the bottom to the top.}\label{Figure1}
\end{figure}
\begin{figure}[h!] 
\centering  \includegraphics[width= 8.4cm,height=8.4cm]
{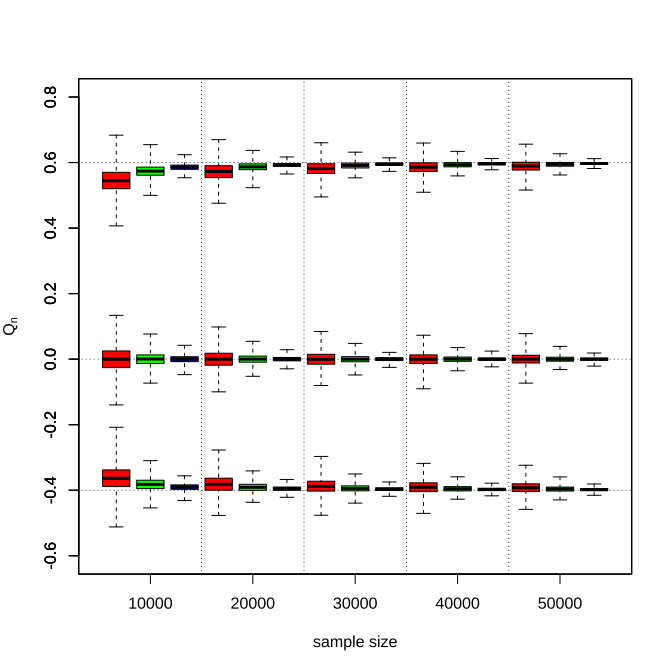}
  \caption{Box-plot of estimator $Q_n$ for different target quantile of Uniform distribution. In each sample size divided by a vertical dotted line, the three boxes establish results with different privacy budgets by left: $r = 0.25$, middle $r = 0.5$, and right: $r = 0.9$. The horizontal dashed lines represent the true value $Q$ in $\tau = 0.3$, $0.5$, $0.8$ from the bottom to the top.}\label{Figure3}
\end{figure}

\begin{figure}[h!]
  \centering  \includegraphics[width= 8.4cm,height=8.4cm]{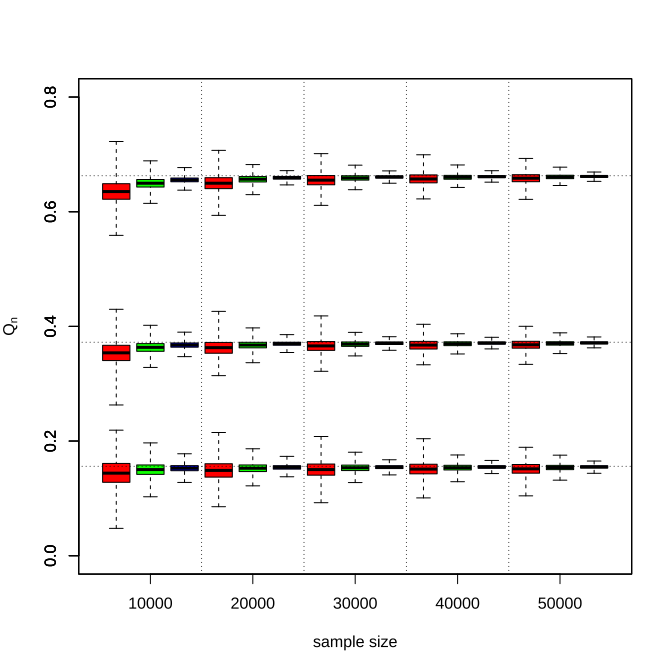}
  \caption{Box-plot of estimator $Q_n$ for different target quantile of PERT distribution. In each sample size divided by a vertical dotted line, the three boxes establish results with different privacy budgets by left: $r = 0.25$, middle $r = 0.5$, and right: $r = 0.9$. The horizontal dashed lines represent the true value $Q$ in $\tau = 0.3$, $0.5$, $0.8$ from the bottom to the top. }\label{Figure4}
\end{figure}

\begin{table}[h!]
  \centering
  \caption{Empirical results of coverage rate(mean absolute error) of proposed confidence interval (\ref{SNinerval}) (estimator $Q_n$) with data collected from Cauchy.}
\resizebox{85mm}{!}{
    \begin{tabular}{ccccc}\toprule
    $n$   & $\tau$ &  $r = 0.25$  &  $r=0.5$  &  $r=0.9$  \\\hline
    \multirow{3}[0]{*}{10000} & 0.3   & 0.894(0.140) & 0.972(0.068) & 0.987(0.037) \\
          & 0.5   & 0.807(0.045) & 0.876(0.024) & 0.906(0.014) \\
          & 0.8   & 0.853(0.399) & 0.989(0.207) & 1.000(0.112) \\\hline
    \multirow{3}[0]{*}{20000} & 0.3   & 0.928(0.076) & 0.966(0.037) & 0.982(0.020) \\
          & 0.5   & 0.872(0.034) & 0.908(0.018) & 0.927(0.010) \\
          & 0.8   & 0.950(0.219) & 0.991(0.105) & 0.998(0.055) \\\hline
    \multirow{3}[0]{*}{40000} & 0.3   & 0.944(0.044) & 0.964(0.022) & 0.974(0.012) \\
          & 0.5   & 0.900(0.025) & 0.926(0.012) & 0.939(0.007) \\
          & 0.8   & 0.965(0.114) & 0.984(0.053) & 0.993(0.028) \\\hline
    \multirow{3}[0]{*}{100000} & 0.3   & 0.944(0.025) & 0.956(0.012) & 0.963(0.007) \\
          & 0.5   & 0.927(0.016) & 0.935(0.008) & 0.945(0.004) \\
          & 0.8   & 0.956(0.054) & 0.970(0.026) & 0.980(0.013) \\\hline
    \multirow{3}[0]{*}{200000} & 0.3   & 0.948(0.017) & 0.954(0.008) & 0.958(0.004) \\
          & 0.5   & 0.936(0.011) & 0.944(0.006) & 0.945(0.003) \\
          & 0.8   & 0.952(0.034) & 0.966(0.017) & 0.971(0.008) \\\hline
    \multirow{3}[0]{*}{400000} & 0.3   & 0.942(0.012) & 0.954(0.006) & 0.952(0.003) \\
          & 0.5   & 0.944(0.008) & 0.949(0.004) & 0.946(0.002) \\
          & 0.8   & 0.948(0.023) & 0.960(0.011) & 0.961(0.005) \\\toprule
    \end{tabular}}
  \label{tab:cauchy}%
\end{table}%

\begin{table}[h!]
  \centering
  \caption{Empirical results of coverage rate(mean absolute error) of proposed confidence interval (\ref{SNinerval}) (estimator $Q_n$) with data collected from PERT.}
  \resizebox{85mm}{!}{
    \begin{tabular}{ccccc}\toprule
    $n$   & $\tau$ & $r = 0.25$ & $r=0.5$ & $r=0.9$ \\\hline
    \multirow{3}[0]{*}{10000} & 0.3   & 0.900(0.021) & 0.927(0.011) & 0.938(0.006) \\
          & 0.5   & 0.951(0.022) & 0.970(0.011) & 0.971(0.006) \\
          & 0.8   & 0.990(0.029) & 0.997(0.014) & 0.998(0.008) \\\hline
    \multirow{3}[0]{*}{20000} & 0.3   & 0.920(0.015) & 0.932(0.007) & 0.941(0.004) \\
          & 0.5   & 0.950(0.014) & 0.957(0.007) & 0.962(0.004) \\
          & 0.8   & 0.983(0.016) & 0.990(0.008) & 0.992(0.004) \\\hline
    \multirow{3}[0]{*}{40000} & 0.3   & 0.927(0.011) & 0.937(0.005) & 0.936(0.003) \\
          & 0.5   & 0.947(0.009) & 0.951(0.004) & 0.955(0.002) \\
          & 0.8   & 0.974(0.009) & 0.978(0.005) & 0.982(0.002) \\\hline
    \multirow{3}[0]{*}{100000} & 0.3   & 0.934(0.007) & 0.936(0.003) & 0.942(0.002) \\
          & 0.5   & 0.948(0.005) & 0.948(0.003) & 0.956(0.001) \\
          & 0.8   & 0.967(0.005) & 0.969(0.003) & 0.972(0.001) \\\hline
    \multirow{3}[0]{*}{200000} & 0.3   & 0.936(0.005) & 0.935(0.002) & 0.939(0.001) \\
          & 0.5   & 0.943(0.004) & 0.952(0.002) & 0.949(0.001) \\
          & 0.8   & 0.960(0.004) & 0.963(0.002) & 0.964(0.001) \\\hline
    \multirow{3}[0]{*}{400000} & 0.3   & 0.936(0.003) & 0.935(0.002) & 0.936(0.001) \\
          & 0.5   & 0.946(0.003) & 0.946(0.001) & 0.946(0.001) \\
          & 0.8   & 0.955(0.003) & 0.956(0.001) & 0.956(0.001) \\\toprule
    \end{tabular}}
  \label{tab:pert}%
\end{table}%

\begin{table}[h!]
  \centering
  \caption{Empirical results of coverage rate(mean absolute error) of proposed confidence interval (\ref{SNinerval}) (estimator $Q_n$) with data collected from Uniform.}
  \resizebox{85mm}{!}{
    \begin{tabular}{ccccc}\toprule
        $n$   & $\tau$ & $r = 0.25$ & $r=0.5$ & $r=0.9$ \\\hline
    \multirow{3}[0]{*}{10000} & 0.3   & 0.922(0.043) & 0.956(0.021) & 0.972(0.011) \\
          & 0.5   & 0.853(0.030) & 0.898(0.016) & 0.928(0.009) \\
          & 0.8   & 0.965(0.057) & 0.984(0.028) & 0.994(0.015) \\\hline
    \multirow{3}[0]{*}{20000} & 0.3   & 0.930(0.027) & 0.950(0.013) & 0.963(0.007) \\
          & 0.5   & 0.896(0.022) & 0.928(0.011) & 0.934(0.006) \\
          & 0.8   & 0.960(0.032) & 0.977(0.016) & 0.984(0.008) \\\hline
    \multirow{3}[0]{*}{40000} & 0.3   & 0.939(0.017) & 0.953(0.009) & 0.959(0.004) \\
          & 0.5   & 0.921(0.016) & 0.934(0.008) & 0.943(0.004) \\
          & 0.8   & 0.959(0.019) & 0.969(0.009) & 0.974(0.005) \\\hline
    \multirow{3}[0]{*}{100000} & 0.3   & 0.942(0.010) & 0.953(0.005) & 0.955(0.003) \\
          & 0.5   & 0.939(0.010) & 0.942(0.005) & 0.943(0.003) \\
          & 0.8   & 0.954(0.011) & 0.959(0.005) & 0.960(0.003) \\\hline
    \multirow{3}[0]{*}{200000} & 0.3   & 0.944(0.007) & 0.950(0.003) & 0.950(0.002) \\
          & 0.5   & 0.938(0.007) & 0.947(0.004) & 0.946(0.002) \\
          & 0.8   & 0.950(0.007) & 0.956(0.004) & 0.957(0.002) \\\hline
    \multirow{3}[0]{*}{400000} & 0.3   & 0.945(0.005) & 0.947(0.002) & 0.950(0.001) \\
          & 0.5   & 0.944(0.005) & 0.951(0.002) & 0.950(0.001) \\
          & 0.8   & 0.946(0.005) & 0.955(0.002) & 0.948(0.001) \\\toprule
    \end{tabular}}
  \label{tab:uniform}%
\end{table}%

\begin{table}[h!]
  \centering
  \caption{Conversion table between $r$ and $\epsilon$}
\begin{tabular}{cccc}
\hline
$r$  & $\epsilon$ & $r$  & $\epsilon$ \\ \hline
0    & 0          & 0.5  & 1.10       \\
0.05 & 0.10       & 0.55 & 1.24       \\
0.1  & 0.20       & 0.6  & 1.39       \\
0.15 & 0.30       & 0.65 & 1.55       \\
0.2  & 0.40       & 0.7  & 1.73       \\
0.25 & 0.51       & 0.75 & 1.95       \\
0.3  & 0.62       & 0.8  & 2.20       \\
0.35 & 0.73       & 0.85 & 2.51       \\
0.4  & 0.85       & 0.9  & 2.94       \\
0.45 & 0.97       & 0.95 & 3.66\\\hline      
\end{tabular}
  \label{reconversiontable}%
\end{table}%

\clearpage

\section{Alternative self-normalizes}\label{alternativeselfnorm}
 The following self-normalizer can also be used to construct the asymptotically pivotal quantity,
\begin{align*}
    N_n^{\prime} &=\sup_{t \in [0,1]}\left|S_{[nt]} - [nt]Q_n\right|,\\
    N_n^{\prime\prime} &= \int_{0}^{1}\left|S_{[nt]} - [nt]Q_n\right|dt,
\end{align*}
and based on the continuous mapping theorem again, one has that,
\begin{align*}
\frac{n^{-1/2}(S_{n}- nQ)}{n^{-1/2}N_n^{\prime}} &\xrightarrow{d}\frac{W(1)}{\sup_{t \in [0,1]}\left|W(t) - tW(1)\right|},\\
      \frac{n^{-1/2}(S_{n}- nQ)}{n^{-1/2}N_n^{\prime\prime}} &\xrightarrow{d} \frac{W(1)}{\int_0^1\left|W(t) - tW(1)\right|dt}.
\end{align*}
\section{Proof}
\subsection{Proof of Theorem \ref{THM:LRCDP}}
\label{proof:LRC}

Exhaustive computation yields that for any $(a,b)\in\{0,1\}^2$

\begin{equation}\label{EQ:KTDP}
    \frac{\mathbb{P}(LRC(q,r,x)=a|\mathbf{1}_{x>q}=b)}{\mathbb{P}(LRC(q,r,x)=a|\mathbf{1}_{x>q}=1-b)}\in\{ \frac{ 1+r} {1 - r},\frac{ 1-r} {1 + r}\}
\end{equation}

\subsection{Proof of Theorem \ref{THM:consist}, \ref{THM:normality} and \ref{THM:weakconvergence}}\label{proof:asym}
One can verify that the recursive equation (\ref{EQ:Qn}) is asymptotically equivalent to
\begin{align*}
    q_{n+1} &= q_n + d_n\left(1 - \frac{2}{1-r +2r(1-\tau)}{1}_{x^*_n>q_n}\right),
\end{align*}
where $\mathbb{P}(x^*_n = x_n) = r$, $\mathbb{P}(x^*_n = -\infty) = \mathbb{P}(x^*_n = \infty) = (1-r)/2$. Let
\begin{align*}
    H(z,X) &=1 - \frac{2}{1-r +2r(1-\tau)}{1}_{X>z}\\
    h(z,X) &= \mathbb{E}H(z,X) = 1 - \frac{2(1-F(z))}{1-r +2r(1-\tau)}.
\end{align*}
Hence $F(Q) = \tau$ is equivalent to $h(Q,X^*) = 0$. Then, one will find that the estimation of $Q$ with sample $x_1,\dots, x_n$ under LDP is equivalent to the estimation of $Q^*$ with sample $x_1,\dots, x_n$ without LDP constraints. The standard framework of the SGD method, such as Theorem 2 and 3 in \citep{dippon1998globally}, can be applied. Moreover, the statements in Theorems \ref{THM:consist}, \ref{THM:normality}, and \ref{THM:weakconvergence} hold true.

\subsection{}\label{APfinal}
We prove this by contradiction. Assuming that for any $n_0>1$ there is a $n>n_0$ such that :$$\epsilon_n/\sigma_n>k>1.$$
Let $$w=\Phi\left(-\frac{1}{2}\right)-\Phi\left(\frac{1}{2}-k\right)>0.$$ We choose a sufficiently large $n_0$ such that for any $n>n_0$ $$\mathbb{P}(\hat{\theta}_n/\sigma_n<1/2| H_0)\geq \Phi(1/2)-w/3$$ and  $$\mathbb{P}(\hat{\theta}_n/\sigma_n<1/2| H_1)\leq \Phi(1/2-\epsilon_n/\sigma_n)+w/3\leq \Phi(1/2-k)+w/3.$$

Then,
\begin{align*}
    \mathbb{P}(\hat{\theta}_n/\sigma_n<1/2| H_0)- \mathbb{P}(\hat{\theta}_n/\sigma_n<1/2| H_1)&\geq  \Phi(1/2)-\Phi(1/2-k)-2w/3\\
    &=2 \Phi\left(\frac{1}{2}\right)-1+\Phi\left(-\frac{1}{2}\right)-\Phi\left(\frac{1}{2}-k\right)-2w/3 \\
&=2 \Phi\left(\frac{1}{2}\right)-1 +w/3\\
&> 2 \Phi\left(\frac{1}{2}\right)-1+w/6
\end{align*}
Then $\hat{\theta}_n$ is not $(0,2 \Phi\left(\frac{1}{2}\right)-1+w/6)$-DP and therefore is not asymptotically $1$-GDP leading to a contradiction.
\end{document}